\documentclass[lettersize,journal,a4paper]{IEEEtran}

\usepackage{amsmath}
\usepackage{amsthm}

\theoremstyle{plain}

\newtheorem{remark}{Remark}[section]

\newtheorem{proposition}{Proposition}[section]

\newtheorem{approximation}{Approximation}

\newtheorem{corollary}{Corollary}[section]

\usepackage{color}
\usepackage{graphicx} 
\usepackage{bm}
\usepackage{multicol}
\usepackage{setspace}
\usepackage{subfigure}
\usepackage{fancyhdr} % 添加页眉页脚
\usepackage{indentfirst}
\usepackage{enumerate}
\usepackage{amssymb}
\usepackage{stfloats}
\usepackage{bm}
\usepackage{threeparttable}
\usepackage{CJK}
\usepackage[justification=centering]{caption}
\usepackage{makecell}
\usepackage{caption}
\usepackage{cases}
\usepackage{algorithm}
\usepackage{algpseudocode}
\usepackage{graphics}
\usepackage{epsfig}
\usepackage{cite}
\usepackage{url}
\begin{document}
	\title{SpaceDiffusion: Over-the-Orbit Diffusion for Space Generate-and-Forward Communications}
	\author{Jianhao~Huang, Zhanwei~Wang, Khaled~B.~Letaief,~\IEEEmembership{Fellow,~IEEE},
		and~Kaibin~Huang,~\IEEEmembership{Fellow,~IEEE}%
		\thanks {This work has been submitted to the IEEE for possible publication.
			Copyright may be transferred without notice, after which this version may
			no longer be accessible.}
		\thanks{J. Huang, Z. Wang, and K. Huang are with the Department of Electrical
and Computer Engineering, The University of Hong Kong (HKU), Hong
Kong SAR, China (Email: Jianhaoh@hku.hk, \{zhanweiw, huangkb\}@eee.hku.hk).
Corresponding author: K. Huang.}%
		\thanks{Khaled B. Letaief is with the Department of Electronic and Computer
Engineering, The Hong Kong University of Science and Technology (HKUST),
Hong Kong SAR, China (Email: eekhaled@ust.hk).}}

	\maketitle

\begin{abstract}
Satellite communications are an essential component of sixth-generation (6G) mobile networks, which provide ubiquitous connectivity for global services. However, the satellite uplink remains a critical bottleneck for ground devices: their limited transmit power and antenna apertures result in low data rates and high packet errors.  To overcome this bottleneck, this paper advocates a novel relaying paradigm termed generate-and-forward (GF) communications, where satellites exploit on-orbit generative artificial intelligence (AI) to robustly reconstruct corrupted data prior to forwarding. Specifically, we propose \emph{SpaceDiffusion}, an over-the-orbit diffusion framework for satellite-assisted image transmission. The core of this framework is a channel-distortion-aware diffusion theory developed using the following approach. By formulating the recovery of compressed and lost image tokens as an inverse problem, this theory incorporates a channel-distortion correction term directly into the conventional denoising diffusion implicit model (DDIM) update. As a result, this design enables a single pretrained diffusion model to adapt dynamically to varying packet-loss patterns and compression distortions without retraining. Furthermore, we analytically characterize the progressive token-reconstruction error and derive a diffusion-step activation threshold that predicts when SpaceDiffusion is expected to outperform conventional decode-and-forward (DF) relaying. Building on these theoretical insights, we further develop an energy-aware early-exit policy to efficiently deploy SpaceDiffusion in orbit.  Experimental results demonstrate that SpaceDiffusion achieves lower end-to-end latency  compared to DF scheme with  retransmission protocol and saves approximately $15$ dB of uplink transmit power at a target perceptual quality. 
\end{abstract}

\begin{IEEEkeywords}
Satellite communications, generative artificial intelligence, generate-and-forward, diffusion models.
\end{IEEEkeywords}

\section{Introduction}

Satellite communications are envisioned as an essential component of sixth-generation (6G) mobile networks, which enable ubiquitous coverage for global services such as artificial intelligence (AI), immersive Internet of Things (IoT), and multimedia access \cite{10560514,11143883,10835069,kodheli2021satcomnewspace,11569083}. In conventional architectures, satellites are often regarded as spaceborne extensions of terrestrial relaying nodes, and their primary role is to receive and forward information between ground terminals \cite{kodheli2021satcomnewspace}. However, this relay-centric design faces inherent limitations in satellite uplinks due to long propagation distances, Doppler shifts, and limited transmit power of ground devices. To address the issue, fixed ground terminals have been provisioned with high-gain antennas, advanced beamformers, and higher-frequency spectrum resources, such as Ku- and Ka-band \cite{3gpp_tr38811,kodheli2021satcomnewspace}. However, these solutions are impractical for IoT and handheld devices as they are constrained by battery capacity, hardware complexity, and antenna size \cite{sorensen2021nbiotleo}. As a result, the low uplink transmission rates (typically ranging from a few to tens of Kbps for low-power devices) remain a critical bottleneck for bandwidth-demanding multimedia services \cite{sorensen2021nbiotleo,Talgat2024UplinkLEOIoT}. Recent advances in space AI and on-orbit computing have been transforming satellites from passive forwarding nodes into intelligent information-processing centers \cite{wang2026spacemoE,feilden2024train,zhejianglab2025threebody}. For example, several on-orbit AI computing platforms, such as Starcloud-1 \cite{feilden2024train} and the Three-Body Computing Constellation \cite{zhejianglab2025threebody}, have already been launched. This paradigm shift opens new opportunities to reduce the need of high-throughput uplink transmission by enabling on-orbit generative  data recovery.

\begin{figure}[t]
	\normalsize
	\setlength{\textfloatsep}{6pt plus 2pt minus 2pt}
	\setlength{\abovecaptionskip}{4pt}
	\setlength{\belowcaptionskip}{-0.1cm}
	\centering
	\includegraphics[width=0.97\linewidth]{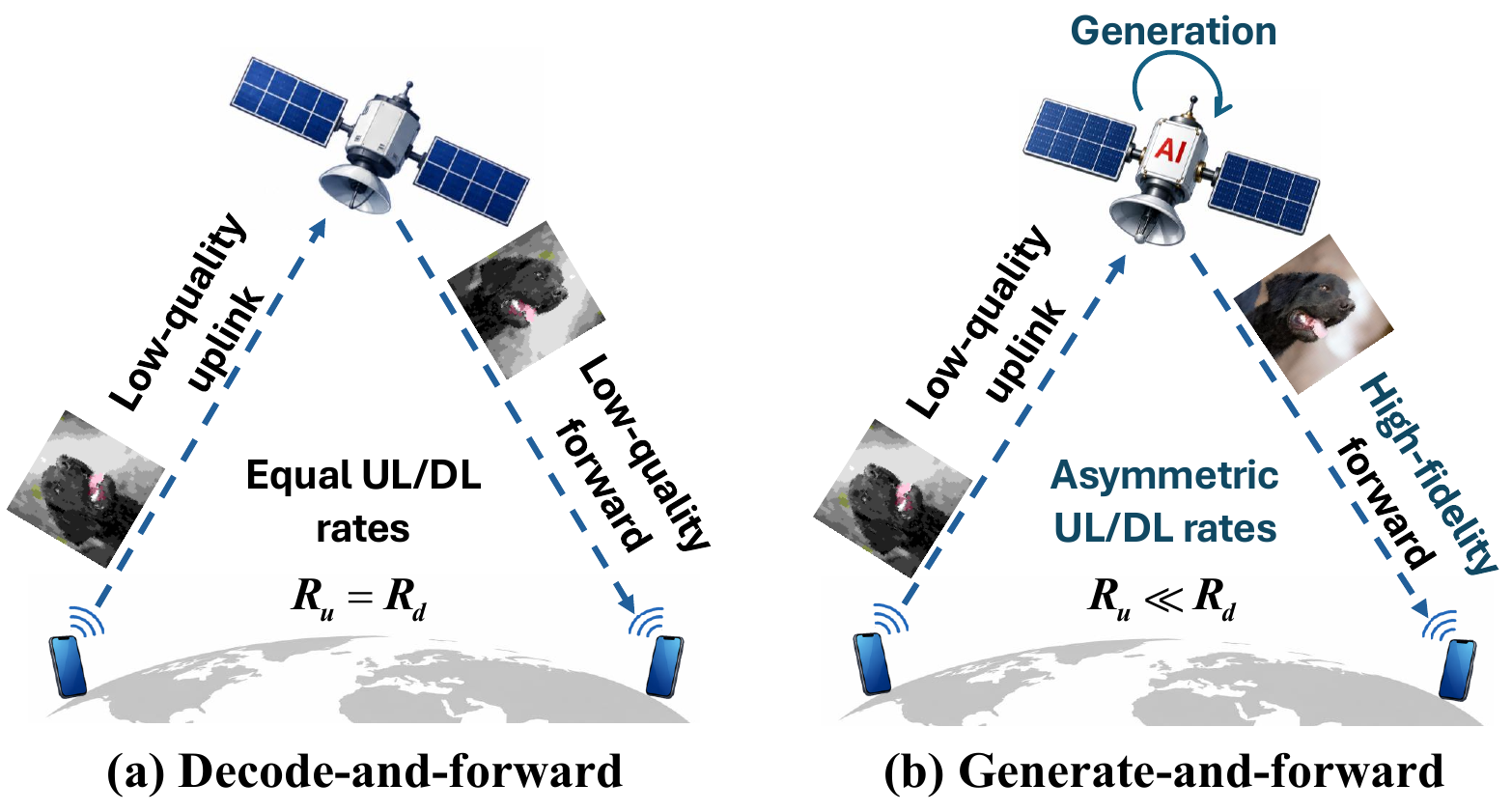}
	%\caption{fig2}
	\captionsetup{justification=justified}
		\caption{Comparison of the satellite generate-and-forward scheme with the conventional decode-and-forward scheme. For satellite communications with limited-power devices, the downlink capacity (on the order of Mbps) is much larger than the uplink capacity (on the order of Kbps) \cite{11589398}.}
	\label{architecture}
\end{figure}
Under this uplink bottleneck, conventional satellite systems typically rely on either \emph{transparent} or \emph{regenerative} architectures to support reliable transmission \cite{3gpp_tr38811,3GPPTR38821}. On one hand, the transparent architecture adopts the low-complexity amplify-and-forward (AF) scheme, where the satellite directly forwards the received signals after power amplification and frequency conversion \cite{3gpp_tr38811,kodheli2021satcomnewspace}. On the other hand, the regenerative architecture is based on the decode-and-forward (DF) scheme, where the satellite performs onboard channel decoding and re-encoding, thereby mitigating error propagation \cite{3gpp_tr38811,3GPPTR38821}. Nevertheless, both architectures still depend on accurate bit-level delivery over the uplink. When packets are corrupted or fail to decode, acknowledgement (ACK)/negative acknowledgement (NACK) feedback and retransmission are required to recover the lost information \cite{6888474,3GPPTR38821}. Due to the long round-trip time (RTT) of satellite links, such feedback-based retransmission mechanisms can significantly increase end-to-end (E2E) latency. Existing studies attempt to mitigate RTT-induced stalling through enhanced retransmission scheduling, advanced forward error correction, and increasing the number of parallel hybrid automatic repeat request (HARQ) processes \cite{3GPPTR38821}. However, these methods still operate within the feedback-based retransmission paradigm and thus cannot fundamentally overcome the mentioned limitations. This necessitates a paradigm shift from accurate bit delivery to more efficient AI-empowered data recovery. 

Recently, generative AI (GenAI) has achieved remarkable success in data recovery from low-dimensional prompts \cite{Song2021DDIM,rombach2022high,chung2023diffusion}. Motivated by this capability, existing studies have introduced GenAI into terrestrial communication networks to improve transmission efficiency and robustness against channel impairments \cite{wang2025resicomp,qiao2025token,huang2026generative,jiang2025semantic}. The communication efficiency can be improved by converting source data into low-dimensional feature or token domain for transmission, followed by generative recovery at the receiver.  For example, transformer-based foundation models have been used to reconstruct latent tokens lost to packet erasures \cite{wang2025resicomp,qiao2025token}, while diffusion models have been used to iteratively impute missing features for error-resilient transmission \cite{huang2026generative}. These generative techniques can be easily extended to satellite communications by treating the satellite as a conventional relay node. A representative example involves deploying a foundation model at the ground gateway to recover corrupted features forwarded by the satellite \cite{jiang2025semantic,95ab4f9188f548208bf5b1e244e39dfe}. However, such generative strategy falls short on exploiting satellite computation resources to overcome the uplink bottleneck.  Moreover, when key features are lost, gateway-side recovery requires an additional uplink and downlink retransmission cycle, which substantially increases the latency.

With the advancement in space AI computing, satellites can perform generative data processing on-orbit before forwarding \cite{11589398,shi2025satellite}. This alleviates the burden on resource-constrained uplink transmission.  We refer to this new relaying paradigm as the \emph{generate-and-forward} (GF) scheme, as illustrated in Fig. \ref{architecture}. The GF scheme leverages GenAI to recover missing features/tokens or lost information caused by packet error, rate control, or source compression under low-rate uplink transmissions. Consequently, the generated data allows the flexibility of having more detailed representations than the transmitted one. The resultant increased forwarding payload can be supported by leveraging the asymmetric uplink-downlink capacities and relaxing the conventional uplink-downlink rate-matching constraint. However, the realization of GF scheme in the satellite system  faces two new challenges.  The first is the design of efficient generative algorithms tailored to satellite communications.  Existing generative models are predominantly data-driven and may not generalize well to dynamic satellite channel environments \cite{rombach2022high,Song2021DDIM}. The second is the deployment of potentially power-hungry  GenAI algorithms on energy-constrained satellites, where the available energy harvested from solar power varies with orbital dynamics and illumination conditions.

In this paper, we address these challenges by proposing \emph{SpaceDiffusion}, a novel over-the-orbit diffusion framework for space-based GF image transmission. The techniques in the framework alleviate the uplink bottleneck by enabling the satellite to recover highly compressed and partially erased image tokens on board.  The main contributions and findings of this study are summarized as follows.
\begin{itemize}
	
	\item \textbf{Channel-distortion-aware diffusion theory:} We formulate the problem of satellite token recovery under packet loss and lossy source compression as an inverse problem.  Based on this model, we derive a posterior-based reverse-time stochastic differential equation (SDE) that progressively recovers clean tokens from Gaussian noise. 
	The core of the derived SDE is the posterior score function that decomposes into a data-distribution term and a channel-distortion term determined by the packet-loss mask and source-compression noise covariance.  By exploiting this additive decomposition, we develop a channel-distortion-aware denoising diffusion implicit model (DDIM) algorithm that inserts the channel-distortion term as a plug-in correction to the conventional DDIM update. This enables a single pre-trained diffusion model to adapt to varying satellite channel conditions without retraining.
	
	\item \textbf{Diffusion performance analysis:} We analyze the progressive performance of the proposed channel-distortion-aware DDIM as the reverse-step index $t$ decreases from $T$ to $0$, where $T$ is the maximum reverse-step index. Leveraging the pseudo-inverse Gaussian approximation and the forward diffusion relation, we derive an approximate upper bound on the token-domain mean-squared error (MSE) at each step.  Based on this bound and the progressive denoising property of reverse diffusion \cite{Song2021DDIM,Song2021ScoreBasedSDE}, we further derive a diffusion-step threshold that predicts the minimum number of reverse steps required for SpaceDiffusion to outperform conventional DF relaying.
	
	\item \textbf{Energy-aware diffusion-step optimization:} Based on the derived threshold, we formulate an optimization problem that maximizes the number of executable reverse diffusion steps under energy and latency constraints. By solving this problem, we derive a closed-form early-exit policy that activates on-board generation only when the available energy and time are sufficient to reach the diffusion step threshold; otherwise, the satellite falls back to DF relaying to avoid ineffective diffusion computation.

	\item \textbf{Experimental results:} Extensive experiments on the real-world AFHQ and Kodak image datasets validate SpaceDiffusion under diverse channel and orbital conditions. Compared with a DF baseline employing  retransmission, SpaceDiffusion reduces E2E latency by up to \(40\%\) in LEO. In addition, SpaceDiffusion requires about \(15\) dB less uplink transmit power than the non-generative compression baseline to achieve the same perceptual quality. This power efficiency makes it well suited to satellite applications, such as IoT connectivity over non-terrestrial networks (NTNs).  Experiments across different orbital configurations further show that orbital illumination strongly affects the number of executable diffusion steps, while the proposed energy-aware policy adapts onboard generation to the available resources and consistently outperforms the no-generation baseline.

\end{itemize}

The remainder of this paper is organized as follows. Section II gives an overview of the satellite GF communications. Section III presents the system blocks and problem formulations. Section IV develops the proposed posterior-based diffusion theory and the algorithm. Section V studies optimal diffusion steps under satellite energy limitations. Section VI provides experimental results, and Section VII concludes this paper.

\section{Overview of Satellite Generate-and-Forward Communications}

Consider a satellite GF system, where a  ground device uploads data to a satellite, and the satellite performs on-board generative recovery before forwarding the information to a ground device/station. 
The satellite is equipped with a GenAI-enabled module to reconstruct high-fidelity data from corrupted inputs due to poor satellite uplink transmissions.

Specifically, the overall operation workflow consists of four sequential stages.
\begin{enumerate}
	\item \textbf{Tokenization:} Let
	$\mathbf{X}$
		denote the source data to be transmitted. At the transmitter, the semantic encoder $F_{\bm{\phi}}(\cdot)$ transforms $\mathbf{X}$ into a token representation
		$
				\mathbf{Y} \in\mathbb{R}^{K\times D},
		$
		where $K$ denotes the number of tokens and $D$ is the token dimension. The token representation $\mathbf{Y}$ is then quantized and source-coded into bit streams.

	\item \textbf{Uplink transmission:}
		The compressed token bitstream is  interleaved and packetized into a sequence of fixed-length packets for transmission.  Due to the poor satellite channel conditions and limited bandwidth, uplink transmission suffers from high compression  and  packet losses. 
	
	\item \textbf{Over-the-orbit token generation:} After uplink reception and decoding, the satellite obtains a distorted token representation
$
		\tilde{\mathbf{Y}}\in\mathbb{R}^{K\times D},
$
		in which some tokens are correctly received while others are  missing due to packet loss. We propose an over-the-orbit generation module $G_{\bm{\Theta}}(\cdot)$ to recover the  corrupted tokens directly in the latent space to obtain $\bar{\mathbf{Y}}$.
		
		\item \textbf{Downlink forwarding:} After the generation process, the satellite re-encodes and forwards the recovered information to the destination ground user through the downlink.  The destination user reconstructs the image through a semantic decoder $D_{\bm{\psi}}(\cdot)$.
\end{enumerate}

In the following section, we elaborate on the above four stages in detail and then discuss the main challenges of the  GF communication system.

\begin{figure}[t]
	\normalsize
	\setlength{\abovecaptionskip}{4pt}
	\setlength{\belowcaptionskip}{-0.1cm}
	\centering
	\includegraphics[width=0.98\linewidth]{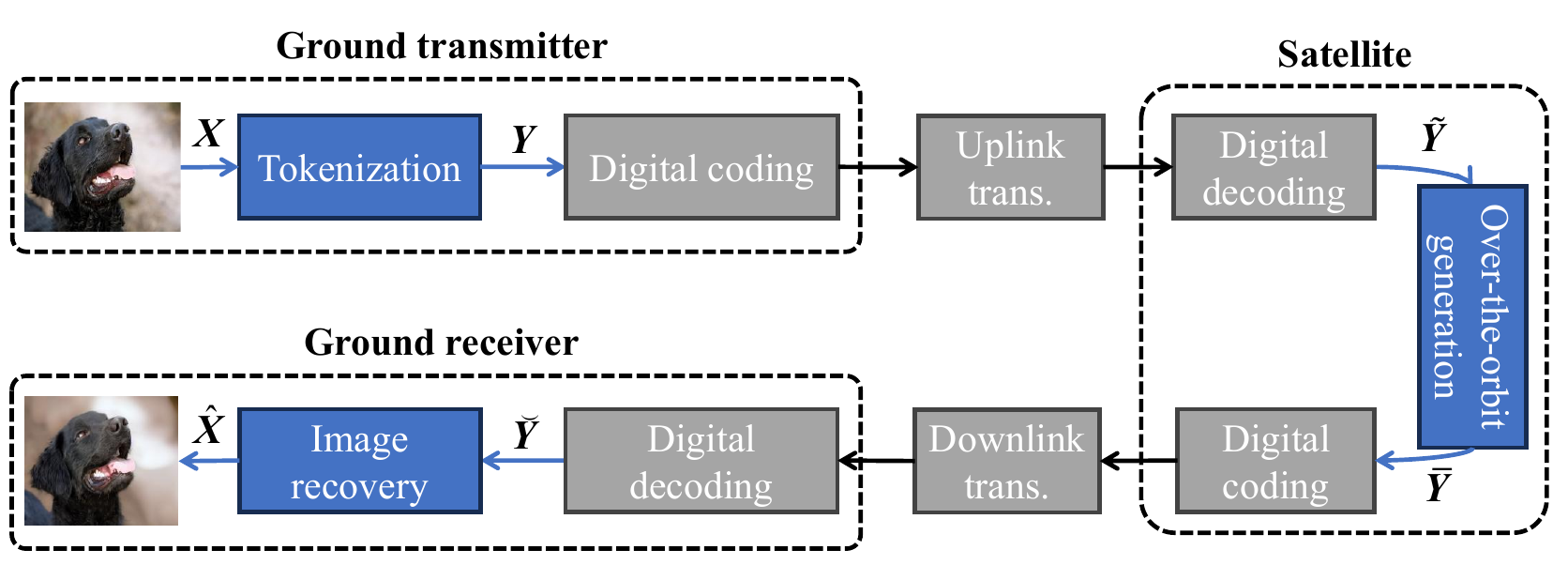}
	%\caption{fig2}
	\captionsetup{justification=justified}
	\caption{ The framework of the Satellite GF system}
	\label{system}
\end{figure}

\section{System Model and Problem Formulation}
In this section, we present the system model of the GF  system and introduce its main  problems. Here, we take image transmission as an example, but the proposed framework can be extended to other data types, such as video and audio, as shown in Fig. \ref{system}. 
\subsection{Tokenization}

\subsubsection{Token Representation}

For a given source image $\mathbf{X}\in\mathbb{R}^{H\times W\times 3}$, the transmitter employs a neural network-based encoder $F_{\bm{\phi}}(\cdot)$ to directly map the image into a token representation, i.e.,
\begin{equation}
	\mathbf{Y}=F_{\bm{\phi}}(\mathbf{X})\in\mathbb{R}^{K\times D}.
\end{equation}
The token representation can be written as
\begin{equation}
	\mathbf{Y}=
	\left[
	\mathbf{y}_1,\mathbf{y}_2,\ldots,\mathbf{y}_{K}
	\right]^{\mathsf T}.
\end{equation}
Here, $\mathbf{y}_n\in\mathbb{R}^{D}$ denotes the $n$-th token, which represents a local latent patch of the source image. Tokenization converts the image into a sequence of  local features, thereby reducing pixel-level redundancy and enabling efficient compression and packetization \cite{wang2025resicomp}. This local token representation is also well suited for the diffusion model in Section IV.

\subsubsection{Digital Source Coding}
Due to the limited uplink bandwidth, the token representation $\mathbf{Y}$ needs to be compressed in a lossy manner before transmission. The design of an efficient lossy source coder generally requires knowledge of the token distribution. However, the exact distribution of $\mathbf{Y}$ is difficult to characterize due to the intractable image distribution and the nonlinear semantic encoder. To facilitate tractable source-coding design, we assume that the tokens are independent and identically distributed (i.i.d.) and that each token $\mathbf{y}_n$ approximately follows a multivariate Gaussian distribution\footnote{This Gaussian approximation is introduced only for the source coding design in this subsection. It is not assumed in the later diffusion-based recovery process, where the accuracy of token prior plays a critical role in the reconstruction quality.}, i.e.,
\begin{equation}
	\mathbf{y}_n \sim \mathcal{N}(\boldsymbol{\mu},\boldsymbol{\Sigma}),
\end{equation}
where $\boldsymbol{\mu}\in\mathbb{R}^{D}$ and $\boldsymbol{\Sigma}\in\mathbb{R}^{D\times D}$ denote the mean vector and covariance matrix estimated from training samples. The covariance matrix $\boldsymbol{\Sigma}$ captures the statistical correlations among token components. 

% Specifically, the covariance matrix admits the eigenvalue decomposition
% where $\mathbf{U}$ is an orthogonal transform matrix and
% contains the eigenvalues. Each token is first transformed into the decorrelated domain as
% where $\mathbf{z}_n=[z_{n,1},\ldots,z_{n,D}]^{\mathsf T}$ contains approximately decorrelated Gaussian components.
Here, we adopt linear transform coding \cite{Huang1963BlockQuantization} to encode the correlated Gaussian sources, as it naturally exploits their second-order correlation structure while retaining low implementation complexity. Moreover, it allows the source coding rate to be conveniently adapted to the uplink channel conditions by adjusting the quantization resolution. Specifically,
the eigendecomposition of the covariance matrix is
$
	\boldsymbol{\Sigma}=\mathbf{U}\boldsymbol{\Lambda}\mathbf{U}^{\mathsf T},
$
where $\mathbf{U}$ is orthogonal and
$
	\boldsymbol{\Lambda}=\mathrm{diag}(\lambda_1,\lambda_2,\ldots,\lambda_D)
$
contains the eigenvalues. Each token is decorrelated as
$
	\mathbf{z}_n=\mathbf{U}^{\mathsf T}(\mathbf{y}_n-\boldsymbol{\mu}),
$
where $\mathbf{z}_n=[z_{n,1},\ldots,z_{n,D}]^{\mathsf T}$.
Each coefficient is uniformly quantized \cite{gray1998quantization} as
\begin{equation}
\label{quanti:1}	\hat{z}_{n,k}
	=
	\Delta_k
	\left\lfloor
	\frac{z_{n,k}}{\Delta_k}
	+\frac{1}{2}
	\right\rfloor,
\end{equation}
where $\lfloor \cdot \rfloor$ is the floor operator and
$
	\Delta_k=\frac{\sqrt{\lambda_k}}{s},
$
with $s>0$ controlling the quantization resolution. Huffman coding \cite{Huffman1952MinimumRedundancy} then encodes each value using approximately
$
	B_{n,k}\approx -\log_2 P_k(\hat{z}_{n,k}),
$
bits, where $P_k$ is the probability mass function. 

% After entropy decoding, the reconstructed token is obtained through the inverse transform
% According to \eqref{quanti:1} and \eqref{trans:1}, the reconstructed token can be expressed as
% where $\mathbf{q}_n=\mathbf{U}\mathbf{e}_n$ and $\mathbf{e}_n=\hat{\mathbf{z}}-\mathbf{z}_n$ is the quantization noise. When the token dimension $D$ is sufficiently large, by the Central Limit Theorem \cite{rohatgi2015introduction}, $\mathbf{q}_n$ approximately follows a Gaussian distribution, i.e., $\mathbf{q}_n \sim \mathcal{N}(\mathbf{0},\boldsymbol{\Sigma}_q)$ with covariance
After entropy decoding, the inverse transform gives
$
	\hat{\mathbf{y}}_n=\mathbf{U}\hat{\mathbf{z}}_n+\boldsymbol{\mu}.
$
According to \eqref{quanti:1}, we have  
\begin{equation}
	\hat{\mathbf{y}}_n=\mathbf{y}_n+\mathbf{q}_n,
	\label{eq:additive_token_noise}
\end{equation}
where $\mathbf{q}_n=\mathbf{U}\mathbf{e}_n$ and
$
	\mathbf{e}_n=\hat{\mathbf{z}}_n-\mathbf{z}_n
$ is the quantization error. For sufficiently large $D$, the Central Limit Theorem \cite{rohatgi2015introduction} gives $\mathbf{q}_n\sim\mathcal{N}(\mathbf{0},\boldsymbol{\Sigma}_q)$, where
\begin{equation}
	\boldsymbol{\Sigma}_q
	=
	\mathbf{U}
	\mathrm{diag}\!\left(
	\frac{\Delta_1^2}{12},\ldots,\frac{\Delta_D^2}{12}
	\right)
	\mathbf{U}^{\mathsf T}
	=
	\frac{1}{12s^2}\boldsymbol{\Sigma}.
\end{equation}

\subsection{Uplink Transmissions}

After tokenization and source coding, the compressed bitstream is transmitted from the  ground device to the relay satellite. 

\subsubsection{Interleaving \& Packetization}

Denote the source-coded bitstream of an image as $\mathbf{b}=\{\mathbf{b}_1,\mathbf{b}_2,\ldots,\mathbf{b}_{K}\}$, where $\mathbf{b}_n$ is the compressed bit sequence of the $n$-th token. To mitigate the impact of burst errors during transmission, an interleaving operation is applied to the sequence. Defined by a permutation $\pi(\cdot)$, the interleaved bitstream is $\mathbf{b}^{\pi}=\{\mathbf{b}_{\pi(1)},\mathbf{b}_{\pi(2)},\ldots,\mathbf{b}_{\pi(K)}\}$.

After interleaving, the tokens are assembled into fixed-length packets of length $L_{\rm p}$. Each packet comprises a header of length $L_h$ and a payload of length $L_{\mathrm{pay}} = L_{\rm p}-L_h$. To prevent error propagation \cite{10845799}, each token's bitstream is treated as an indivisible  unit, meaning a token cannot be fragmented across multiple packets. If the allocated token bits are less than the payload capacity, the packet is simply padded with zeros.
Let $\mathcal{I}_m$ denote the ordered set of token indices allocated to the $m$-th packet. The $m$-th packet is constructed as
\begin{equation}
	\mathbf{p}_m=
	\big[
	\mathbf{h}_m,\,
	\mathbf{b}_{\pi(i_1)},\,
	\ldots,\,
	\mathbf{b}_{\pi(i_k)},\,
	\mathbf{0}
	\big],
\end{equation}
where $\mathbf{h}_m$ is the header, $i_k \in \mathcal{I}_m$, and $\mathbf{0}$ denotes zero-padding bits. 
 The header carries the control information required for token reconstruction, including the ordered token indices, and the padding length. Accordingly, the final packet sequence is denoted by $\mathcal{P}=\{\mathbf{p}_1,\mathbf{p}_2,\ldots,\mathbf{p}_M\}$, where $M$ denotes the total number of packets.

\subsubsection{Channel Coding}

% Each source packet $\mathbf{p}_m$ is further protected by a digital channel coding. Let $\mathcal{C}(\cdot)$ denote the channel coding function. The coded symbols corresponding to $\mathbf{p}_m$ is expressed as
% where $L_c$ denotes the block length. The corresponding channel coding rate in bits per channel use is given by
% With uplink bandwidth $B$, the uplink transmission rate in bits per second is
Each packet is channel-coded as
\begin{equation}
	\mathbf{c}_m=\mathcal{C}(\mathbf{p}_m) \in \mathbb{C}^{L_c},
\end{equation}
where $\mathcal{C}(\cdot)$ is the encoder and $L_c$ is the block length. The code rate is
$
	r_{\rm u}=\frac{L_{\rm p}}{L_c}.
$
Thus, the uplink bit rate over bandwidth $B$ is
$
	R_{\rm u}=B r_{\rm u}.
$

\subsubsection{Uplink Model}
Consider a block fading channel model \cite{Talgat2024UplinkLEOIoT}, where the received signal at the satellite for packet $m$ is given by
% where $p_u>0$ is the uplink transmission power, $h_m\in\mathbb{C}$ denotes the uplink channel coefficient, and $\mathbf{w}_m\sim\mathcal{CN}(\mathbf{0},\sigma^2\mathbf{I})$ is the additive white Gaussian noise vector.
\begin{equation}
	\mathbf{r}_m=h_m \sqrt{p_u}\,\mathbf{c}_m+\mathbf{w}_m,
\end{equation}
where $p_u$ is the transmit power, $h_m$ is the channel coefficient, and $\mathbf{w}_m\sim\mathcal{CN}(\mathbf{0},\sigma^2\mathbf{I})$ is additive noise.

% To capture the propagation characteristics of satellite communications, the uplink channel coefficient is modeled as
% where $\ell_m$ denotes the large-scale channel gain and $g_m$ is the small-scale Rician fading coefficient. The large-scale gain mainly depends on the free-space path loss, antenna gains, and atmospheric attenuation, and is expressed as
The channel is decomposed as
$
	h_m=\sqrt{\ell_m}\,g_m,
$
where $\ell_m$ is the large-scale gain and $g_m$ is the Rician fading coefficient. The former is
\begin{equation}
	\ell_m=\frac{G_tG_r}{L_{\mathrm{fs}}(d_m,f_c)\xi_m},
\end{equation}
% where $G_t$ and $G_r$ are the transmit and receive antenna gains, respectively, $d_m$ is the propagation distance, $f_c$ is the carrier frequency, $L_{\mathrm{fs}}(d_m,f_c)$ denotes the free-space path loss, and $\xi_m$ accounts for additional attenuation such as atmospheric absorption, rain fading, and shadowing. The received signal-to-noise ratio (SNR) for packet $m$ is defined as
where $G_t$ and $G_r$ are the antenna gains, $d_m$ is the propagation distance, $f_c$ is the carrier frequency, $L_{\mathrm{fs}}$ is the free-space path loss, and $\xi_m$ captures additional attenuation. The received SNR is
$
	\gamma_m = \frac{p_u |h_m|^2}{\sigma^2}.
$
% Under finite-blocklength channel coding, the decoding error probability of packet $m$ can be approximated by \cite{polyanskiy2010channel}
% where $C(\gamma_m)=\log_2(1+\gamma_m)$ is the channel capacity in bits per channel use, $V(\gamma_m)=\left(1-(1+\gamma_m)^{-2}\right)(\log_2 e)^2$ is the channel dispersion, and $Q(\cdot)$ denotes the Gaussian $Q$-function \cite{polyanskiy2010channel}.
With block length $L_c$, the decoding error probability is approximated by \cite{polyanskiy2010channel}
\begin{align} \label{finite_eq}
	\rho_m \approx Q\!\left(
			\frac{\sqrt{L_c}\left(C(\gamma_m)-r_{\rm u}\right)}
	{\sqrt{V(\gamma_m)}}
		\right),
\end{align}
where $C(\gamma_m)=\log_2(1+\gamma_m)$, $V(\gamma_m)=\left(1-(1+\gamma_m)^{-2}\right)(\log_2 e)^2$, and $Q(\cdot)$ denote the channel capacity, channel dispersion, and Gaussian $Q$-function, respectively.

%
%\subsubsection{Latency Constraint}
%
%Due to the short satellite visibility window, the uplink transmission must be completed within the available contact duration $T$. Let $L_o$ denote the average packet overhead, including the header information and zero-padding bits. Then, the latency constraint is given by
%\begin{equation}
%	\frac{R_s+M L_o}{R_{\rm u}}\leq T,
%\end{equation}
%where  $B$ is the uplink bandwidth. Satisfying a stringent latency requirement generally requires a smaller source rate $R_s$, which leads to larger quantization step scale $s$ and hence larger compression loss.

\subsection{Over-the-Orbit Token Generation}
After receiving the uplink signals, the satellite performs demodulation and channel decoding to recover the transmitted packets. By collecting all decoded packets and performing depacketization, deinterleaving, and source decoding, the satellite obtains a distorted token representation, denoted by
\begin{equation}
	\tilde{\mathbf{Y}}=
	\left[
	\tilde{\mathbf{y}}_1,\tilde{\mathbf{y}}_2,\ldots,\tilde{\mathbf{y}}_{K}
	\right]^{\mathsf T}
	\in\mathbb{R}^{K\times D}.
\end{equation}
Due to the severe channel impairments in satellite communications, some packets may be decoded incorrectly or lost entirely. The lost tokens are replaced by an all-zero vector.

Equivalently, the distorted token representation can be  modeled as
\begin{equation}\label{packet}
	\tilde{\mathbf{Y}}=\mathbf{A}\odot \hat{\mathbf{Y}},
\end{equation}
where $\mathbf{A}\in\{0,1\}^{K\times D}$ is the error mask matrix indicating the positions of lost tokens and $\odot$ denotes the Hadamard product. Specifically, if the $n$-th token is lost, then the $n$-th row of $\mathbf{A}$ is an all-zero vector; otherwise, it is an all-one vector. Since the packet loss is determined by the channel decoding outcome, the error mask $\mathbf{A}$ can be obtained at the satellite through channel decoding and a cyclic redundancy check (CRC) \cite{Peterson1961CyclicCodes}. 

Based on the received signal $\tilde{\mathbf{Y}}$, we design the generation module to recover the corrupted tokens, yielding
\begin{equation}
	\bar{\mathbf{Y}}=G_{\bm{\Theta}}\!\left(\tilde{\mathbf{Y}},\mathbf{A}\right),
\end{equation}
\subsection{Downlink Forwarding Transmissions}

After the generation process, the satellite forwards the recovered token representation $\bar{\mathbf{Y}}$ to the destination ground device. 
 Let $R_{s,\mathrm{dl}}$ denote the average source bit length  for encoding $\bar{\mathbf{Y}}$. The coding process is the same as the uplink process. 

Different from the uplink, where limited transmit power and antenna gain make token losses critical, the downlink is considered to operate at a sufficiently high transmission rate with a low packet-error rate \cite{3gpp_tr38811}. Therefore, the downlink rate (bits/s) is  modeled by the Shannon capacity
\begin{equation}\label{eq:downlink_rate}
	R_{\rm d} = B_{\mathrm{dl}}\log_2(1+\gamma_{\mathrm{dl}}),
\end{equation}
where $B_{\mathrm{dl}}$ denotes the downlink bandwidth and $\gamma_{\mathrm{dl}}$ is the received downlink signal-to-noise ratio. 

After receiving the downlink bitstream, the destination ground user performs entropy decoding and token reconstruction to recover the token representation. Let $\breve{\mathbf{Y}}\in\mathbb{R}^{K\times D}$ denote the finally received token representation at the user side. The reconstructed image is then obtained through the semantic decoder $D_{\bm{\psi}}(\cdot)$ as
\begin{equation}
	\hat{\mathbf{X}} = D_{\bm{\psi}}(\breve{\mathbf{Y}}).
\end{equation}

\subsection{Problem Formulation}

Designing the satellite GF system gives rise to two major problems.

\subsubsection{Token Generation as an Inverse Problem}

According to \eqref{eq:additive_token_noise} and \eqref{packet}, the token representation received at the satellite can be modeled as
\begin{equation}\label{inverse}
	\tilde{\mathbf{Y}}=
	\underbrace{\mathbf{A}}_{\scriptstyle \text{packet loss}}
	\odot
	\left(
	\mathbf{Y}+
	\underbrace{\mathbf{Q}}_{\scriptstyle \text{compression loss}}
	\right).
\end{equation}
Accordingly, the received tokens are corrupted by channel-induced packet loss and source-compression loss. Given $\mathbf{A}$ and the covariance matrix of $\mathbf{Q}$, recovering $\mathbf{Y}$ from $\tilde{\mathbf{Y}}$ becomes an ill-posed inverse problem \cite{GuerreroColon2008ImageRestoration}. Conventional methods, such as Bayesian-based restoration \cite{GuerreroColon2008ImageRestoration}, require an accurate prior model of the token distribution; otherwise, their performance can degrade substantially. This motivates the use of diffusion models, which learn the token distribution directly from training data. However, conventional diffusion-based methods, such as diffusion posterior sampling (DPS) \cite{chung2023diffusion}, are primarily designed for image-domain inverse problems such as inpainting and do not account for the specific characteristics of token representations in satellite systems.

\subsubsection{On-Board Deployment Under Energy Limitation}

The on-board deployment of generation models remains challenging for satellite systems. Satellites are subject to stringent constraints on computational resources due to limited solar energy harvesting. Therefore, the design of the generation models must balance  performance, computational complexity, and energy resources. This issue will be discussed in detail in Section V.

\section{Channel-distortion-aware Diffusion Theory}

In this section, we derive a channel-distortion-aware diffusion theory by solving the  inverse problem in \eqref{inverse}. First, we introduce the corresponding SDE  and then present the solution algorithm. Then, we present the performance analysis by benchmarking against the DF scheme.

\subsection{Channel Posterior-based SDE}

The core idea of diffusion theory is to define a forward process that gradually perturbs a clean data into  isotropic Gaussian noise, and then learn a reverse process that progressively removes the  noise. To solve the inverse problem in \eqref{inverse},  the reverse process is further developed  to generate the desired data not only from noise but also conditioned on the received measurement $\tilde{\mathbf{Y}}$.

Specifically, the forward process can be defined by It\^{o} SDE \cite{Song2021ScoreBasedSDE}: 
\begin{equation}	\label{eq:forward_sde}
	d\mathbf{Y}_t=-\frac{\alpha(t)}{2}\mathbf{Y}_tdt+\sqrt{\alpha(t)}d\mathbf{W}_t,
\end{equation}
where $\mathbf{W}_t \in \mathbb{R}^{K\times D}$ is the standard Wiener process \cite{Song2021ScoreBasedSDE} and $\alpha(t)>0$ is the noise schedule of the process. When $t=0$, $\mathbf{Y}_{0}=\mathbf{Y} \sim p(\mathbf{Y})$. When $t=T$ and $T$ is sufficiently large,  the data is overwhelmed by Gaussian noise, i.e.,  $\mathbf{Y}_T\sim \mathcal{N}(\mathbf{0},\mathbf{I})$.

Then, the reverse process iteratively generates the clean data from both the Gaussian noise and $\tilde{\mathbf{Y}}$, which follows the  It\^{o} reverse SDE \cite{Song2021ScoreBasedSDE}:
\begin{align}	\label{eq:reverse_sde}
	d\mathbf{Y}_t=&\left[-\frac{\alpha(t)}{2}\mathbf{Y}_t-\alpha(t)\nabla_{\mathbf{Y}_t} \log p_t(\mathbf{Y}_t\mid \tilde{\mathbf{Y}};\mathbf{A}, \mathbf{\Sigma}_q)\right]dt \nonumber \\
		&+\sqrt{\alpha(t)}d\bar{\mathbf{W}}_t,
\end{align}
where $\bar{\mathbf{W}}_t \in \mathbb{R}^{K\times D}$ is the standard Wiener process \cite{Song2021ScoreBasedSDE}   and $p_t(\mathbf{Y}_t|\tilde{\mathbf{Y}};\mathbf{A}, \mathbf{\Sigma}_q)$ is the probability density function (PDF) of $\mathbf{Y}_t$ conditioned on  $\tilde{\mathbf{Y}}$ with $\mathbf{A}$ and $ \mathbf{\Sigma}_q$.  Note that $\nabla_{\mathbf{Y}_t} \log p_t(\mathbf{Y}_t\mid \tilde{\mathbf{Y}};\mathbf{A}, \mathbf{\Sigma}_q)$ is known as the posterior-based \emph{score function}, which points toward the direction  how a noisy sample can be closer to the data sample. By Bayesian rule \cite{rohatgi2015introduction},  the score function can be written as 
\begin{align} \label{score}
\nabla_{\mathbf{Y}_t} \log p_t(\mathbf{Y}_t|\tilde{\mathbf{Y}};\mathbf{A}, \mathbf{\Sigma}_q) =& \nabla_{\mathbf{Y}_t} \log p_t(\mathbf{Y}_t) \nonumber \\
&+ \nabla_{\mathbf{Y}_t} \log p_t(\tilde{\mathbf{Y}}\mid \mathbf{Y}_t;\mathbf{A}, \mathbf{\Sigma}_q)
\end{align}
The first term is the prior score, which can be approximated by a pre-trained score network $\nabla_{\mathbf{Y}_t} \log p_t(\mathbf{Y}_t)\approx \mathbf{S}_{\bm{\Theta}}(\mathbf{Y}_t,t)$ with parameter $\bm{\Theta}$.
The main challenge is  to characterize the PDF
$p_t(\tilde{\mathbf{Y}}\mid \mathbf{Y}_t;\mathbf{A}, \mathbf{\Sigma}_q)$.
For notation simplicity, we omit $\{\mathbf{A}, \mathbf{\Sigma}_q\}$ in the PDFs. By Bayes' rule,
$p_t(\tilde{\mathbf{Y}}\mid \mathbf{Y}_t)$ can be factorized as
\begin{align} \label{bayesian}
	p_t(\tilde{\mathbf{Y}}\mid \mathbf{Y}_t)
	&= \int p_t(\tilde{\mathbf{Y}}, \mathbf{Y}\mid \mathbf{Y}_t)\, d\mathbf{Y} \nonumber \\
	&= \int p_t(\mathbf{Y}\mid \mathbf{Y}_t)\, p(\tilde{\mathbf{Y}}\mid \mathbf{Y})\, d\mathbf{Y}.
\end{align}

The conditional PDF $p_t(\mathbf{Y}\mid \mathbf{Y}_t)$ is intractable. Hence, we adopt a pseudo-inverse Gaussian approximation \cite{Song2023PiGDM}:
\begin{align} \label{gaussian}
	p_t(\mathbf{Y}\mid \mathbf{Y}_t)
	\approx
	\mathcal{N}(\check{\mathbf{Y}}_t, r_t^2\mathbf{I}_{K\times D}),
\end{align}
where
$
	\check{\mathbf{Y}}_t \triangleq \mathbb{E}(\mathbf{Y}\mid \mathbf{Y}_t),
$
and $r_t>0$ is a time-dependent parameter of the Gaussian approximation. Moreover, $p(\tilde{\mathbf{Y}}\mid \mathbf{Y})$ can be directly obtained from the inverse model in \eqref{inverse}. Based on these models, we have the following proposition.

\begin{proposition}[Channel Posterior-based Score Function]
	\label{prop:posterior_score_spacediffusion}
	For the inverse problem in \eqref{inverse} with given $\{\mathbf{A}, \mathbf{\Sigma}_q\}$, the posterior-based score function can be approximated as
	\begin{align}
		\nabla_{\mathbf{Y}_t} \log p_t(\mathbf{Y}_t\mid\tilde{\mathbf{Y}})
		\approx
		\mathbf{S}_{\bm{\Theta}}(\mathbf{Y}_t,t)
		+
		\underbrace{\operatorname{unvec}\!\left(\mathbf{J}_t^{\top}\operatorname{vec}(\mathbf{G}_t)\right)}_{\triangleq \mathcal{L}_t},
	\end{align}
	where $\operatorname{vec}(\cdot)$ and $\operatorname{unvec}(\cdot)$ denote the vectorization and unvectorization operations, respectively,
	$
		\mathbf{J}_t
		\triangleq
		\frac{\partial \operatorname{vec}(\check{\mathbf{Y}}_t)}
		{\partial \operatorname{vec}(\mathbf{Y}_t)},
	$ denotes the vectorized Jacobian matrix of $\check{\mathbf{Y}}_t$ w.r.t. $\mathbf{Y}_t$,
	and
	\begin{align}
		\mathbf{G}_t
		\triangleq
		\mathbf{A}\odot
		\left(
		(\tilde{\mathbf{Y}}-\check{\mathbf{Y}}_t)
		(r_t^2\mathbf{I}_D+\mathbf{\Sigma}_q)^{-1}
		\right).
	\end{align}
\end{proposition}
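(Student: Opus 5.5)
The plan is to start from the Bayes decomposition \eqref{score}, approximate the first term by $\mathbf{S}_{\bm{\Theta}}(\mathbf{Y}_t,t)$, and reduce the second term to a closed-form Gaussian likelihood using \eqref{bayesian} and \eqref{gaussian}. The model \eqref{inverse} gives $p(\tilde{\mathbf{Y}}\mid\mathbf{Y})$ directly. Because the rows of $\mathbf{A}$ are all-ones or all-zeros, the received rows $\tilde{\mathbf{y}}_n$ with $\mathbf{a}_n=\mathbf{1}$ satisfy $\tilde{\mathbf{y}}_n=\mathbf{y}_n+\mathbf{q}_n$, where $\mathbf{q}_n\sim\mathcal{N}(\mathbf{0},\mathbf{\Sigma}_q)$ are independent across tokens. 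The erased rows are identically zero and carry no information about $\mathbf{Y}$, so they contribute a constant factor. I would therefore write $p(\tilde{\mathbf{Y}}\mid\mathbf{Y})\propto\prod_{n:\mathbf{a}_n=\mathbf{1}}\mathcal{N}(\tilde{\mathbf{y}}_n;\mathbf{y}_n,\mathbf{\Sigma}_q)$.

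Next, I would substitute this likelihood and the pseudo-inverse Gaussian approximation $\mathbf{y}_n\mid\mathbf{Y}_t\sim\mathcal{N}(\check{\mathbf{y}}_{t,n},r_t^2\mathbf{I}_D)$, independent across rows, into \eqref{bayesian}. The integral then factorizes row by row into Gaussian convolutions:
\begin{align}
p_t(\tilde{\mathbf{Y}}\mid\mathbf{Y}_t)\approx c\prod_{n:\mathbf{a}_n=\mathbf{1}}\mathcal{N}\!\left(\tilde{\mathbf{y}}_n;\check{\mathbf{y}}_{t,n},r_t^2\mathbf{I}_D+\mathbf{\Sigma}_q\right).
\end{align}
Taking logarithms gives, up to terms independent of $\mathbf{Y}_t$,
\begin{align}
\log p_t(\tilde{\mathbf{Y}}\mid\mathbf{Y}_t)\approx-\frac{1}{2}\sum_{n}a_n(\tilde{\mathbf{y}}_n-\check{\mathbf{y}}_{t,n})^{\top}(r_t^2\mathbf{I}_D+\mathbf{\Sigma}_q)^{-1}(\tilde{\mathbf{y}}_n-\check{\mathbf{y}}_{t,n}),
\end{align}
where $a_n\in\{0,1\}$ is the row indicator. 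For masked rows $\tilde{\mathbf{y}}_n=\mathbf{0}$, but they are excluded by $a_n=0$.

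Then I would differentiate with the chain rule. The gradient of this quadratic form with respect to $\check{\mathbf{Y}}_t$ is the matrix whose $n$-th row is $a_n(\tilde{\mathbf{y}}_n-\check{\mathbf{y}}_{t,n})^{\top}(r_t^2\mathbf{I}_D+\mathbf{\Sigma}_q)^{-1}$. This uses the symmetry of $\mathbf{\Sigma}_q=\frac{1}{12s^2}\mathbf{\Sigma}$. In matrix form that row-wise gradient is exactly $\mathbf{G}_t=\mathbf{A}\odot\big((\tilde{\mathbf{Y}}-\check{\mathbf{Y}}_t)(r_t^2\mathbf{I}_D+\mathbf{\Sigma}_q)^{-1}\big)$, since right-multiplication acts row-wise and the Hadamard product with the row mask zeros the lost rows. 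Passing back to $\mathbf{Y}_t$ through the vectorized Jacobian $\mathbf{J}_t$ gives $\operatorname{vec}(\nabla_{\mathbf{Y}_t}\log p_t)=\mathbf{J}_t^{\top}\operatorname{vec}(\mathbf{G}_t)$. Unvectorizing yields $\mathcal{L}_t$, and adding the prior score completes the statement.

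The main obstacle is the bookkeeping in the second step. One must justify that the Gaussian approximation is separable across rows, which follows because the covariance $r_t^2\mathbf{I}_{K\times D}$ is isotropic. One must also handle the erased rows carefully: the likelihood there is degenerate, a point mass at zero that is independent of $\mathbf{Y}$, so it must be dropped rather than treated as a Gaussian. Finally, one must check that the vec/row-convention for the gradient matches the definition of $\mathbf{J}_t$. Everything else is standard Gaussian algebra.
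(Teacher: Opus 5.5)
Your proposal is correct and follows essentially the same route as the paper's proof in Appendix A. Both arguments factorize row-wise under the isotropic pseudo-inverse Gaussian approximation, convolve the Gaussians to get covariance $r_t^2\mathbf{I}_D+\mathbf{\Sigma}_q$ for received rows, drop erased rows as constants, differentiate the quadratic form with respect to $\check{\mathbf{Y}}_t$ to obtain $\mathbf{G}_t$, and pull back through $\mathbf{J}_t^{\top}$ before adding the prior score; your remarks on cross-token independence of $\mathbf{q}_n$ and the symmetry of $\mathbf{\Sigma}_q$ simply state points the paper uses tacitly.
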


\begin{IEEEproof}
	Please see Appendix A.
\end{IEEEproof}

Based on the score function in Proposition \ref{prop:posterior_score_spacediffusion}, the reverse SDE in \eqref{eq:reverse_sde} can be solved by generic stochastic samplers, such as Euler--Maruyama algorithm \cite{Higham2001EulerMaruyama}. However, these methods usually require many steps with repeated random perturbations, resulting in high computational cost and unstable reconstruction quality.

\subsection{Channel-distortion-aware DDIM Algorithm}

In this subsection, we resort to an efficient discretization algorithm, namely, the DDIM sampler \cite{Song2021DDIM}, to solve the SDEs in \eqref{eq:forward_sde} and \eqref{eq:reverse_sde}. Compared with generic stochastic samplers, DDIM provides more stable reconstruction quality while requiring only a small number of sampling steps. We discretize time into a total of $T$ steps, where $t\in\{0,1,\ldots,T\}$ denotes the discrete step index. We then present the forward and reverse processes in detail.

\subsubsection{ \textbf{Forward process}}
After discretization, the forward process at step $t$ is given by
\begin{equation}\label{eq:ddim_forward}
	\mathbf{Y}_{t}
	=
	\sqrt{\bar{\alpha}_{t}}\,\mathbf{Y}_0
	+
	\sqrt{1-\bar{\alpha}_{t}}\,\boldsymbol{\epsilon}_t,
	\qquad
	\boldsymbol{\epsilon}_t\sim\mathcal{N}(\mathbf{0},\mathbf{I}_{K\times D}),
\end{equation}
where $\bar{\alpha}_{t}\in(0,1]$ is a predesigned cumulative noise schedule \cite{Song2021DDIM}.

\begin{figure*}[t]
	\normalsize
	\setlength{\abovecaptionskip}{4pt}
	\setlength{\belowcaptionskip}{-0.1cm}
	\centering
	\subfigure[]{
		\includegraphics[width=0.46\linewidth]{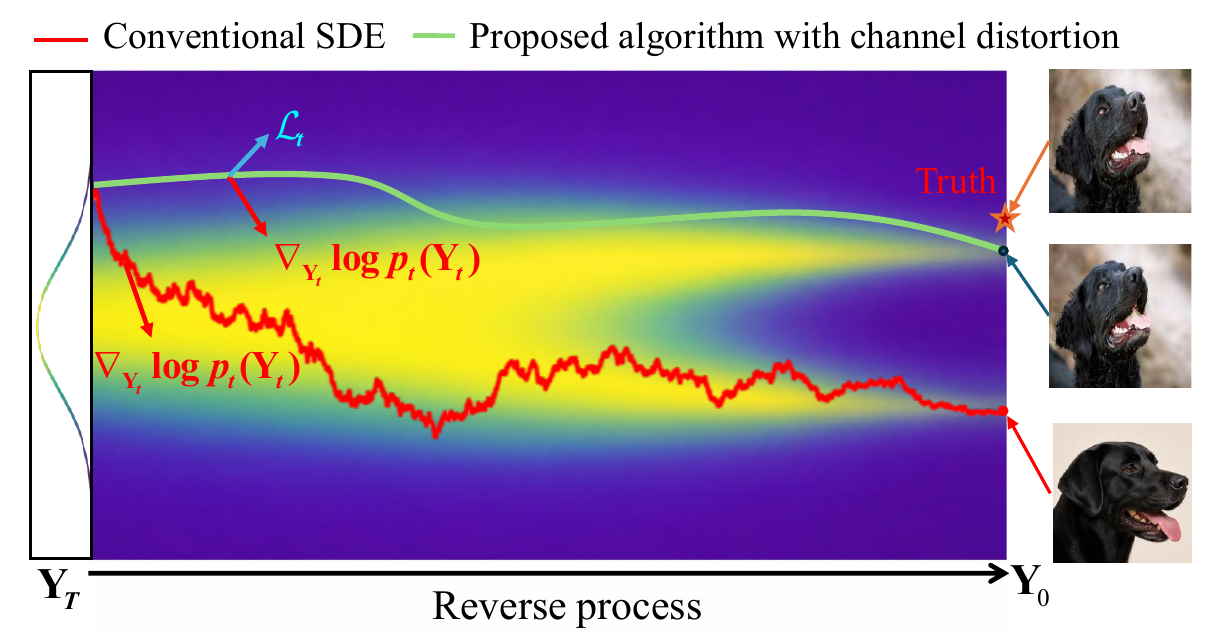}
	}
	% \hfill
	\subfigure[]{
		\includegraphics[width=0.46\linewidth]{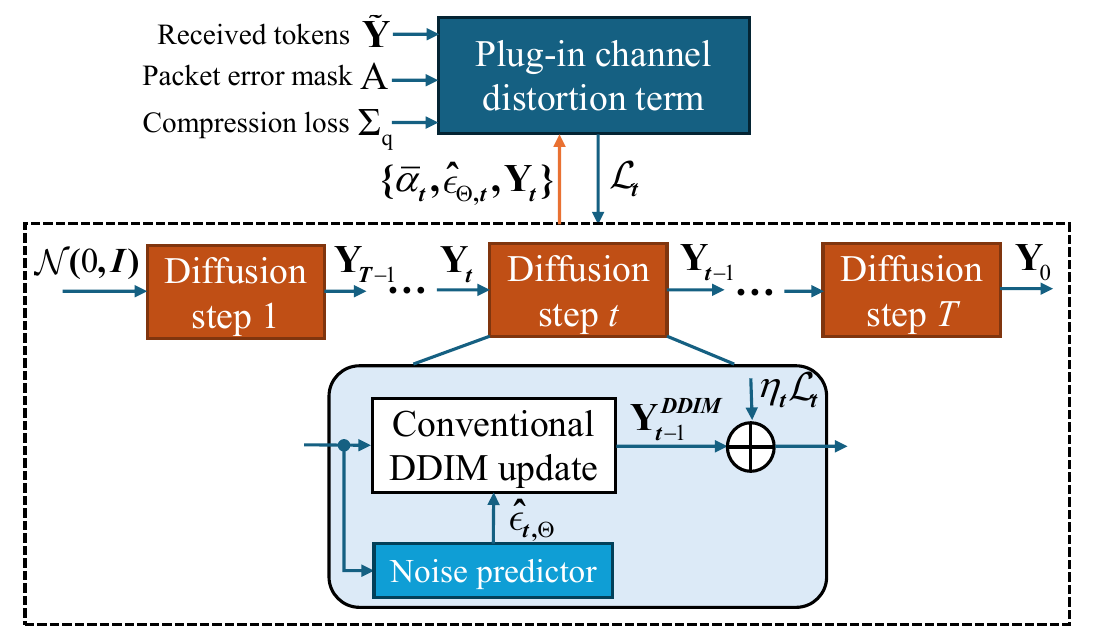}
	}
	%\caption{fig2}
	\captionsetup{justification=justified}
	\caption{Illustration of the proposed posterior-based Diffusion theory. (a) Comparison of the reverse process with and without the plug-in term $\mathcal{L}_t$. (b) Computational flow of the reverse process of the plug-and-play DDIM algorithm. }
	\label{DDIM_figure}
	\vspace{-5pt}
\end{figure*}

\subsubsection{ \textbf{Reverse process with plug-in term}} In the reverse process, the goal is to predict the data sample $\mathbf{Y}_{t-1}$ from the noisy sample $\mathbf{Y}_t$ and the measurement $\tilde{\mathbf{Y}}$. Essentially, this is equivalent to learning the conditional probability $p_{\bm{\Theta}}(\mathbf{Y}_{t-1}\mid \mathbf{Y}_t, \tilde{\mathbf{Y}})$, where $\bm{\Theta}$ denotes the deep neural network (DNN) parameters. According to the conventional DDIM update \cite{Song2021DDIM}, we propose to model $p_{\bm{\Theta}}(\mathbf{Y}_{t-1}\mid \mathbf{Y}_t, \tilde{\mathbf{Y}})$ as
	\begin{align} \label{eq:ddim_reverse}
		\mathcal{N}\!\left(
		\mathbf{Y}_{t-1};\,
		\sqrt{\bar{\alpha}_{t-1}}\widehat{\mathbf{Y}}^{\mathrm{post}}_t+\sqrt{1-\bar{\alpha}_{t-1}-\sigma_t^2}\,\hat{\boldsymbol{\epsilon}}_{\bm{\Theta},t},\,
		\sigma_t^2\mathbf{I}_{K\times D}
		\right),
	\end{align}
	where
	$
		\widehat{\mathbf{Y}}^{\mathrm{post}}_t \triangleq \mathbb{E}(\mathbf{Y}\mid \mathbf{Y}_t,\tilde{\mathbf{Y}})
		$ is the posterior mean, and $\sigma_t^2$ is a variance parameter that controls the randomness of the reverse process. It is usually set close to zero for stable sampling. Here, $\hat{\boldsymbol{\epsilon}}_{\bm{\Theta},t}( \mathbf{Y}_t, t)$ is the noise predictor. It is implemented by a U-Net \cite{rombach2022high} and trained by minimizing the following MSE loss:
\begin{equation}
	\min_{\bm{\Theta}}\;
	\mathbb{E}_{\mathbf{Y}_0,t,\boldsymbol{\epsilon}}
	\left[
	\left\|
	\boldsymbol{\epsilon}_t
	-
	\hat{\boldsymbol{\epsilon}}_{\bm{\Theta}}(\mathbf{Y}_t,t)
	\right\|_2^2
	\right],
\end{equation}
where $\mathbf{Y}_t$ is generated according to \eqref{eq:ddim_forward}. The noise predictor is related to the score function by
$		
	\mathbf{S}_{\bm{\Theta}}(\mathbf{Y}_t,t)
	\approx
	-\frac{\hat{\boldsymbol{\epsilon}}_{\bm{\Theta},t}}{\sqrt{1-\bar{\alpha}_t}}.
$

		In \eqref{eq:ddim_reverse}, the posterior mean can be calculated as
		\begin{align}
		\widehat{\mathbf{Y}}^{\mathrm{post}}_t
		&\overset{(a)}{=}
		\frac{1}{\sqrt{\bar{\alpha}_t}}
		\left(
		\mathbf{Y}_t
		+
		(1-\bar{\alpha}_t)
		\nabla_{\mathbf{Y}_t}\log p_t(\mathbf{Y}_t\mid\tilde{\mathbf{Y}})
		\right) \nonumber\\
		&\overset{(b)}{\approx}
		\frac{1}{\sqrt{\bar{\alpha}_t}}
		\left(
		\mathbf{Y}_t
		+
		(1-\bar{\alpha}_t)\mathbf{S}_{\bm{\Theta}}(\mathbf{Y}_t,t)
		+
		(1-\bar{\alpha}_t)\mathcal{L}_t
		\right) \nonumber\\
		&\overset{}{=}
		\frac{1}{\sqrt{\bar{\alpha}_t}}
		\left(
		\mathbf{Y}_t
		+
		(1-\bar{\alpha}_t)\mathbf{S}_{\bm{\Theta}}(\mathbf{Y}_t,t)
		\right) 
		+
		\frac{1-\bar{\alpha}_t}{\sqrt{\bar{\alpha}_t}}\mathcal{L}_t \nonumber\\
		&\overset{(c)}{=}
		\check{\mathbf{Y}}_t
		+
		\frac{1-\bar{\alpha}_t}{\sqrt{\bar{\alpha}_t}}\mathcal{L}_t.
		\label{eq:posterior_mean_decomp}
	\end{align}
		where $(a)$ follows by applying Tweedie's approach \cite{Song2021DDIM} into posterior mean calculations, $(b)$ follows from Proposition \ref{prop:posterior_score_spacediffusion}, and $(c)$ uses the 
	$
		\check{\mathbf{Y}}_t \triangleq \mathbb{E}(\mathbf{Y}\mid\mathbf{Y}_t)
		=
		\frac{1}{\sqrt{\bar{\alpha}_t}}
		\left(
		\mathbf{Y}_t
		+
		(1-\bar{\alpha}_t)\mathbf{S}_{\bm{\Theta}}(\mathbf{Y}_t,t)
		\right) 
	$ from Tweedie's formula \cite{Song2021DDIM}.
		Further, according to the score-noise relation, we have
	\begin{align}
		\check{\mathbf{Y}}_t
		\approx
		\frac{1}{\sqrt{\bar{\alpha}_t}}
		\left(
		\mathbf{Y}_t
		-
		\sqrt{1-\bar{\alpha}_t}\,\hat{\boldsymbol{\epsilon}}_{\bm{\Theta},t}
		\right).
		\label{eq:checkY_from_noise}
	\end{align}

% At each reverse step $t\rightarrow t-1$, we first compute the conventional DDIM update []
% \begin{equation}\label{eq:ddim_base}
% 	\mathbf{Y}^{\mathrm{ddim}}_{t-1}
% 	=
% 	\sqrt{\bar{\alpha}_{t-1}}
% 	\check{\mathbf{Y}}_t
% 	+
% 	\sqrt{1-\bar{\alpha}_{t-1}}\,\hat{\boldsymbol{\epsilon}}_{\bm{\Theta},t},
% \end{equation}
% where 
% \begin{align}
% 	\check{\mathbf{Y}}_t=\mathbb{E}(\mathbf{Y}\mid \mathbf{Y}_t)\approx\frac{1}{\sqrt{\bar{\alpha}_t}}\left(\mathbf{Y}_t-\sqrt{1-\bar{\alpha}_{t}}\,\hat{\boldsymbol{\epsilon}}_{\bm{\Theta},t}\right).
% \end{align}
% Here,  $\hat{\boldsymbol{\epsilon}}_{\bm{\Theta},t}$ is the noise prediction function, which is related to the score function by
% $		
% 	\mathbf{S}_{\bm{\Theta}}(\mathbf{Y}_t,t)
% 	\approx
% 	-\frac{\hat{\boldsymbol{\epsilon}}_{\bm{\Theta},t}}{\sqrt{1-\bar{\alpha}_t}}.
% $

\noindent According to the reverse conditional model in \eqref{eq:ddim_reverse}, for each reverse step $t\rightarrow t-1$, $\mathbf{Y}_{t-1}$ can be written as
\begin{align} \label{eq:ddim_reverse_sample2}
	\mathbf{Y}_{t-1}
	&=
	\sqrt{\bar{\alpha}_{t-1}}\widehat{\mathbf{Y}}^{\mathrm{post}}_t
	+
	\sqrt{1-\bar{\alpha}_{t-1}-\sigma_t^2}\,\hat{\boldsymbol{\epsilon}}_{\bm{\Theta},t}
	+
	\sigma_t\mathbf{z}_t,
\end{align}
where $\mathbf{z}_t\sim\mathcal{N}(\mathbf{0},\mathbf{I}_{K\times D})$. Substituting \eqref{eq:posterior_mean_decomp} into \eqref{eq:ddim_reverse_sample2} yields
\begin{equation}\label{eq:plugin_update}
	\mathbf{Y}_{t-1}
	=
	\underbrace{\mathbf{Y}^{\mathrm{ddim}}_{t-1}}_{\scriptstyle \text{Direction to }p(\mathbf{Y})}
	+
	\underbrace{\eta_t\mathcal{L}_{t}}_{\scriptstyle \text{Direction based on channel distortion}},
\end{equation}
where $\eta_t=\eta \sqrt{\bar{\alpha}_{t-1}}(1-\bar{\alpha}_t)/\sqrt{\bar{\alpha}_t}$ and
\[
		\mathbf{Y}^{\mathrm{ddim}}_{t-1}
		=
		\sqrt{\bar{\alpha}_{t-1}}\check{\mathbf{Y}}_t
		+
		\sqrt{1-\bar{\alpha}_{t-1}-\sigma_t^2}\,\hat{\boldsymbol{\epsilon}}_{\bm{\Theta},t}
		+
		\sigma_t\mathbf{z}_t.
		\]
Here, we introduce a parameter $\eta>0$ to control the strength of the channel distortion correction. From \eqref{eq:plugin_update}, the first term is exactly the conventional DDIM step \cite{Song2021DDIM}, which drives the sample toward the data distribution, whereas the second term is a channel-aware correction term determined by the received signal $\tilde{\mathbf{Y}}$ and the channel condition.  
Note that in the proposed DDIM algorithm, the posterior correction $\mathcal{L}_t$ is directly inserted into the reverse update in \eqref{eq:plugin_update} without additional training. 

% \begin{algorithm}[t]
% 	\caption{Plug-and-Play DDIM Reverse Algorithm}
% 	\label{alg:pnp_ddim_reverse}
% 	\begin{algorithmic}[1]
% 		\Require Noisy token $\mathbf{Y}_{T}$, measurement $\tilde{\mathbf{Y}}$, noise predictor $\hat{\boldsymbol{\epsilon}}_{\bm{\Theta}}$, posterior correction $\mathcal{L}_{t}$, noise schedule $\{\bar{\alpha}_{t}\}_{t=0}^{T}$, step size $\eta$, and variance schedule $\{\sigma_t\}_{t=1}^{T}$
% 		\Ensure Reconstructed token $\mathbf{Y}_{0}$
% 		\For{$t=T,T-1,\ldots,1$}
% 		\State Compute the prior mean
% 		\[
% 		\check{\mathbf{Y}}_t
% 		=
% 		\frac{1}{\sqrt{\bar{\alpha}_t}}
% 		\left(
% 		\mathbf{Y}_t
% 		-
% 		\sqrt{1-\bar{\alpha}_t}\,\hat{\boldsymbol{\epsilon}}_{\bm{\Theta},t}
% 		\right)
% 		\]
% 		\State Compute  $\mathcal{L}_t$ according to Proposition \ref{prop:posterior_score_spacediffusion}
% 		\State Compute the conventional DDIM step
% 		\[
% 		\mathbf{Y}^{\mathrm{ddim}}_{t-1}
% 		=
% 		\sqrt{\bar{\alpha}_{t-1}}\check{\mathbf{Y}}_t
% 		+
% 		\sqrt{1-\bar{\alpha}_{t-1}-\sigma_t^2}\,\hat{\boldsymbol{\epsilon}}_{\bm{\Theta},t}
% 		+
% 		\sigma_t\mathbf{z}_t
% 		\]
% 		\State Update 
% 		$
% 		\mathbf{Y}_{t-1}
% 		$ according to \eqref{eq:plugin_update}.
% 		\EndFor
% 	\end{algorithmic}
% \end{algorithm}

\subsection{Performance Analysis}
In this subsection, we analyze the performance of the proposed  DDIM algorithm as the  reverse  step index $t$ decreases from $T$ to $0$. 
Denote
$
	\delta_t
	\triangleq
	\eta\frac{\sqrt{\bar{\alpha}_{t-1}}(1-\bar{\alpha}_t)}{\sqrt{\bar{\alpha}_t}}.
$
Then, we calculate the reconstruction error at each reverse step:
\begin{equation}
	\mathcal{E}_{t-1}
	\triangleq
	\mathbb{E}\!\left[\left\|\mathbf{Y}_{t-1}-\mathbf{Y}\right\|_F^2\right].
\end{equation}
The expectation is taken over the diffusion noise, image token $\mathbf{Y}$, the error mask $\mathbf{A}$ and the compression noise $\mathbf{Q}$. The error mask $\mathbf{A}$ is caused by channel impairments and is determined by packet decoding. Specifically, if the $n$-th packet is decoded successfully, the corresponding mask row is all-one; otherwise, it is all-zero. Thus, we model
\begin{equation}
	\mathbf{A}
	=
	\mathbf{a}\mathbf{1}_D^{\mathsf T},
	\qquad
	a_n\overset{\mathrm{i.i.d.}}{\sim}\mathrm{Bernoulli}(1-\rho),
\end{equation}
where $\rho=\mathbb{E}\{\rho_m\}$ denotes the average packet error probability. 

\begin{approximation}[Denoising error model]\label{assump:denoising_error}
	At reverse step $t$, the noise predictor is approximated as:
	\begin{equation}
		\hat{\boldsymbol{\epsilon}}_{\bm{\Theta},t}
		\approx
		\boldsymbol{\epsilon}_t+\mathbf{U}_t,
	\end{equation}
	where $\mathbf{U}_t\in\mathbb{R}^{K\times D}$ denotes the training-induced denoising error independent of $\boldsymbol{\epsilon}_t$. The entries of $\mathbf{U}_t$ are i.i.d. Gaussian random variables with zero mean and fixed variance $\sigma_u^2$, which can be estimated by offline training.
	
\end{approximation}

 Then, we have the following result.

\begin{proposition}[Approximate Error Characterization]\label{thm:reconstruction_error_upper_bound}
		Under Approximation \ref{assump:denoising_error} and pseudo-inverse approximation in \eqref{gaussian}, the deterministic DDIM setting $\sigma_t=0$, and the Jacobian bound $\|\mathbf{J}_t\|_2\le \kappa$, the error at each reverse step $t-1$ can be approximately upper bounded by
	\begin{equation}\label{eq:reconstruction_error_upper_bound}
		\begin{aligned}
				\mathcal{E}_{t-1}
				& \lesssim
				\left(
				\sqrt{a_t^{\mathrm{ddim}}}
				+
				\delta_t\kappa \sqrt{\chi_t}
				\right)^2
				\triangleq
				\mathcal{E}_{t-1}^{\mathrm{up}}.
		\end{aligned}
	\end{equation}
	where
	\begin{equation}
		\mathbf{B}_t
		\triangleq
		(r_t^2\mathbf{I}_D+\mathbf{\Sigma}_q)^{-1},
	\end{equation}
		\begin{equation}
			\chi_t
			\triangleq
					(1-\rho)K
				\operatorname{tr}\!\left(
				\mathbf{B}_t
				\left(\mathbf{\Sigma}_q+\frac{1-\bar{\alpha}_t}{\bar{\alpha}_t}\sigma_u^2\mathbf{I}_D\right)
				\mathbf{B}_t
				\right),
		\end{equation}
	and
		\begin{equation}
			\begin{aligned}
				a_t^{\mathrm{ddim}}
				&=
					(\sqrt{\bar{\alpha}_{t-1}}-1)^2
						K\!\left(\|\boldsymbol{\mu}\|_2^2+\operatorname{tr}(\boldsymbol{\Sigma})\right)\\
						&\quad+
						(1-\bar{\alpha}_{t-1})KD\\
						&\quad+
						KD\sigma_u^2
					\left(
					\sqrt{1-\bar{\alpha}_{t-1}}
					-
					\sqrt{\frac{\bar{\alpha}_{t-1}(1-\bar{\alpha}_t)}{\bar{\alpha}_t}}
					\right)^2.
				\end{aligned}
				\label{eq:ddim_error_with_training_noise}
			\end{equation}
	Here, $\mathcal{E}_{T}=\mathbb{E}\left[\|\mathbf{Y}_T-\mathbf{Y}\|_F^2\right]=K\!\left(\|\boldsymbol{\mu}\|_2^2+\operatorname{tr}(\boldsymbol{\Sigma})\right)+
						KD$ with $\mathbf{Y}_T \sim \mathcal{N}(0,\bm{I})$.
	\end{proposition}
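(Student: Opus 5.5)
The plan is to split the one-step update in \eqref{eq:plugin_update} with $\sigma_t=0$ into a prior-driven DDIM part and a channel-correction part, bound each separately in root-mean-square form, and then combine them with the triangle (Minkowski) inequality:
\begin{equation*}
\sqrt{\mathcal{E}_{t-1}}\le\sqrt{\mathbb{E}\|\mathbf{Y}^{\mathrm{ddim}}_{t-1}-\mathbf{Y}\|_F^2}+\delta_t\sqrt{\mathbb{E}\|\mathcal{L}_t\|_F^2}.
\end{equation*}
Squaring this gives the form $(\sqrt{a_t^{\mathrm{ddim}}}+\delta_t\kappa\sqrt{\chi_t})^2$. So the work reduces to showing $\mathbb{E}\|\mathbf{Y}^{\mathrm{ddim}}_{t-1}-\mathbf{Y}\|_F^2\approx a_t^{\mathrm{ddim}}$ and $\mathbb{E}\|\mathcal{L}_t\|_F^2\lesssim\kappa^2\chi_t$.

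\textbf{DDIM term.} Substitute the forward relation \eqref{eq:ddim_forward}, $\mathbf{Y}_t=\sqrt{\bar\alpha_t}\mathbf{Y}+\sqrt{1-\bar\alpha_t}\boldsymbol{\epsilon}_t$, and Approximation \ref{assump:denoising_error}, $\hat{\boldsymbol{\epsilon}}_{\bm{\Theta},t}=\boldsymbol{\epsilon}_t+\mathbf{U}_t$, into \eqref{eq:checkY_from_noise}. This gives
\begin{equation*}
\check{\mathbf{Y}}_t=\mathbf{Y}-\sqrt{(1-\bar\alpha_t)/\bar\alpha_t}\,\mathbf{U}_t.
\end{equation*}
Hence
\begin{equation*}
\mathbf{Y}^{\mathrm{ddim}}_{t-1}-\mathbf{Y}=(\sqrt{\bar\alpha_{t-1}}-1)\mathbf{Y}+\sqrt{1-\bar\alpha_{t-1}}\,\boldsymbol{\epsilon}_t+\Big(\sqrt{1-\bar\alpha_{t-1}}-\sqrt{\bar\alpha_{t-1}(1-\bar\alpha_t)/\bar\alpha_t}\Big)\mathbf{U}_t.
\end{equation*}
The three summands are mutually independent, and $\boldsymbol{\epsilon}_t$ and $\mathbf{U}_t$ have zero mean, so all cross terms vanish. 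Using $\mathbb{E}\|\mathbf{Y}\|_F^2=K(\|\boldsymbol{\mu}\|^2+\operatorname{tr}\boldsymbol{\Sigma})$ from the token model, $\mathbb{E}\|\boldsymbol{\epsilon}_t\|_F^2=KD$, and $\mathbb{E}\|\mathbf{U}_t\|_F^2=KD\sigma_u^2$ reproduces exactly \eqref{eq:ddim_error_with_training_noise}. The value of $\mathcal{E}_T$ follows the same way from the independence of $\mathbf{Y}_T$ and $\mathbf{Y}$.

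\textbf{Correction term.} By Proposition \ref{prop:posterior_score_spacediffusion} and $\|\mathbf{J}_t\|_2\le\kappa$, we have $\|\mathcal{L}_t\|_F\le\kappa\|\mathbf{G}_t\|_F$. Write
\begin{equation*}
\tilde{\mathbf{Y}}-\check{\mathbf{Y}}_t=\mathbf{A}\odot(\mathbf{Y}+\mathbf{Q})-\check{\mathbf{Y}}_t.
\end{equation*}
The outer $\mathbf{A}\odot$ in $\mathbf{G}_t$ keeps only the received rows, and since $\mathbf{A}=\mathbf{a}\mathbf{1}^{\mathsf T}$ acts row-wise it commutes with right multiplication by $\mathbf{B}_t$. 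On a received row $n$ the residual is
\begin{equation*}
\mathbf{q}_n+\sqrt{(1-\bar\alpha_t)/\bar\alpha_t}\,\mathbf{u}_{t,n},
\end{equation*}
and the data $\mathbf{Y}$ cancels. Row $n$ of $\mathbf{G}_t$ is therefore $a_n$ times this zero-mean vector times $\mathbf{B}_t$, whose covariance is $\mathbf{B}_t(\boldsymbol{\Sigma}_q+\tfrac{1-\bar\alpha_t}{\bar\alpha_t}\sigma_u^2\mathbf{I}_D)\mathbf{B}_t$, using the independence of $\mathbf{Q}$ and $\mathbf{U}_t$ and the symmetry of $\mathbf{B}_t$. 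Taking the trace, averaging over $a_n\sim\mathrm{Bernoulli}(1-\rho)$, and summing over the $K$ rows gives $\mathbb{E}\|\mathbf{G}_t\|_F^2=\chi_t$.

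\textbf{Main obstacle.} The delicate point is justifying the cancellation of $\mathbf{Y}$ together with the treatment of $\kappa$. The bound $\|\mathbf{J}_t^{\top}\mathrm{vec}(\mathbf{G}_t)\|\le\kappa\|\mathbf{G}_t\|_F$ holds pathwise, so taking expectations is legitimate even though $\mathbf{J}_t$ depends on the same randomness as $\mathbf{G}_t$; no independence between them is needed. The cancellation, however, relies on Approximation \ref{assump:denoising_error} and the pseudo-inverse model \eqref{gaussian} holding simultaneously. This is why the result is stated with $\lesssim$, and I would flag these approximations explicitly at that step.
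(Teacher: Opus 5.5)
Your proposal is correct and follows essentially the same route as the paper. Your Minkowski step is exactly the paper's expansion of $\|\mathbf{e}_t+\delta_t\mathcal{L}_t\|_F^2$ followed by Cauchy--Schwarz on the cross term, and your derivations of $\mathbb{E}\|\mathbf{e}_t\|_F^2\approx a_t^{\mathrm{ddim}}$ (via $\check{\mathbf{Y}}_t\approx\mathbf{Y}-\sqrt{(1-\bar{\alpha}_t)/\bar{\alpha}_t}\,\mathbf{U}_t$) and of $\mathbb{E}\|\mathbf{G}_t\|_F^2=\chi_t$, together with the pathwise bound $\|\mathcal{L}_t\|_F\le\kappa\|\mathbf{G}_t\|_F$, coincide with steps (b) and (c) of Appendix B (you additionally check the value of $\mathcal{E}_T$, which the paper only states).
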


\begin{IEEEproof}
	Please see Appendix B.
\end{IEEEproof}

\begin{remark}[Progressive Denoising Trend] \label{remark} 
The reverse-time SDE in \eqref{eq:reverse_sde} progressively transports noisy samples toward the data distribution as the step index $t$ decreases \cite{Song2021ScoreBasedSDE}. Accordingly, $\bar{\alpha}_{t-1}$ approaches one and the prescribed diffusion variance decreases. When the diffusion model is well trained, such that $\sigma_u^2$ is sufficiently small, and the posterior correction remains bounded, these schedule-dependent contractions dominate the error bound. Consequently, $\mathcal{E}_{t-1}^{\mathrm{up}}$ exhibits an overall decreasing trend under the considered noise schedule, consistent with the progressive denoising behavior of the reverse SDE.

\end{remark}

Based on the tractable upper bound in Proposition \ref{thm:reconstruction_error_upper_bound}, we can further analyze the performance of the plug-and-play DDIM by comparing it with the  conventional DF scheme for transmitting the corrupted tokens $\tilde{\mathbf{Y}}$. Here, we assume that the compression loss for downlink transmission is negligible. Then, we have the following result.

\begin{corollary}[Diffusion Activation Threshold]\label{cor:upper_bound_below_mse} 
	For the DF scheme,  the corrupted tokens $\tilde{\mathbf{Y}}$ are transmitted to the receiver without generation process. Hence, the mean-squared error is calculated by
	\begin{equation}\label{eq:direct_mse}
		\begin{aligned}
			\mathrm{MSE}
			&=
			\mathbb{E}\!\left[\left\|\tilde{\mathbf{Y}}-\mathbf{Y}\right\|_F^2\right] \\
			&=
				K\!\left[
			\rho\left(\|\boldsymbol{\mu}\|_2^2+\operatorname{tr}(\boldsymbol{\Sigma})\right)
			+
			(1-\rho)\operatorname{tr}(\mathbf{\Sigma}_q)
			\right].
			\end{aligned}
			\end{equation}
	The reverse step predicted by \eqref{eq:reconstruction_error_upper_bound} to outperform the DF scheme is
	\begin{equation}
		t^{\mathrm{threshold}}
		=
		\max\{\,t\in\{0,\ldots,T-1\}:\mathcal{E}_{t}^{\mathrm{up}}<\mathrm{MSE}\,\}.
	\end{equation}
		\end{corollary}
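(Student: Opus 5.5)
The proof has two parts: an exact computation of the DF mean-squared error in \eqref{eq:direct_mse}, and the observation that the threshold is simply the last reverse step at which the approximate bound of Proposition~\ref{thm:reconstruction_error_upper_bound} falls below this value.

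\textbf{DF error.} Using \eqref{inverse} with $\mathbf{A}=\mathbf{a}\mathbf{1}_D^{\mathsf T}$, the error decomposes token by token:
\[
\tilde{\mathbf{y}}_n-\mathbf{y}_n = a_n(\mathbf{y}_n+\mathbf{q}_n)-\mathbf{y}_n .
\]
We condition on $a_n$.

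\begin{itemize}
\item If $a_n=0$, which happens with probability $\rho$, the error is $-\mathbf{y}_n$. Its second moment is $\mathbb{E}\|\mathbf{y}_n\|_2^2=\|\boldsymbol{\mu}\|_2^2+\operatorname{tr}(\boldsymbol{\Sigma})$, by the Gaussian token model $\mathbf{y}_n\sim\mathcal{N}(\boldsymbol{\mu},\boldsymbol{\Sigma})$.
\item If $a_n=1$, which happens with probability $1-\rho$, the error is $\mathbf{q}_n$. Its second moment is $\operatorname{tr}(\boldsymbol{\Sigma}_q)$, because $\mathbf{q}_n$ is zero mean with covariance $\boldsymbol{\Sigma}_q$.
\end{itemize}

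Averaging over $a_n$ requires $a_n$ to be independent of $(\mathbf{y}_n,\mathbf{q}_n)$. This holds because packet loss is driven by channel decoding, which does not depend on token content. Averaging and summing the $K$ identical i.i.d. token contributions gives $K[\rho(\|\boldsymbol{\mu}\|_2^2+\operatorname{tr}(\boldsymbol{\Sigma}))+(1-\rho)\operatorname{tr}(\boldsymbol{\Sigma}_q)]$, which is \eqref{eq:direct_mse}. No cross terms survive, since in each case only one random vector appears. The assumption of negligible downlink compression means this error is exactly what reaches the receiver under DF.

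\textbf{Threshold.} By Proposition~\ref{thm:reconstruction_error_upper_bound}, $\mathcal{E}_t\lesssim\mathcal{E}_t^{\mathrm{up}}$ for every step $t$. Whenever $\mathcal{E}_t^{\mathrm{up}}<\mathrm{MSE}$, the GF output at step $t$ is therefore predicted to beat DF. The admissible set $\{t:\mathcal{E}_t^{\mathrm{up}}<\mathrm{MSE}\}$ is finite.

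Remark~\ref{remark} says $\mathcal{E}^{\mathrm{up}}$ decreases overall as $t$ decreases. So once the bound drops below MSE, it stays below for all smaller $t$. The largest such $t$ is then the earliest point in the reverse trajectory, starting from $T$, at which generation is predicted to pay off. Equivalently, $T-t^{\mathrm{threshold}}$ is the minimum number of reverse steps required, which justifies the $\max$ in the definition.

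The index is restricted to $t\le T-1$ because the bound at index $t$ comes from the update $t+1\to t$. At the initial point, $\mathcal{E}_T=K(\|\boldsymbol{\mu}\|_2^2+\operatorname{tr}\boldsymbol{\Sigma})+KD$, which exceeds MSE since $\rho\le1$ and, for $s\ge1$, $\operatorname{tr}\boldsymbol{\Sigma}_q=\operatorname{tr}\boldsymbol{\Sigma}/(12s^2)$ is smaller than $\operatorname{tr}\boldsymbol{\Sigma}$. So no diffusion activity is trivially admissible.

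\textbf{Main obstacle.} The only delicate point is that the max is taken over a set that may be empty. The statement is a prediction based on an approximate bound, not a rigorous guarantee. I would note explicitly that, if the set is empty, DF is retained, and that the monotonicity is heuristic, coming from Remark~\ref{remark}, rather than proved.
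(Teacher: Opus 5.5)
Your proposal is correct and takes the same route as the paper, whose proof just says the DF error follows from the inverse model \eqref{inverse} and that the threshold comes from comparing $\mathcal{E}_t^{\mathrm{up}}$ with it; your token-wise conditioning on $a_n$, using independence of the mask from the token content, supplies exactly that omitted computation. Your extra remarks are consistent with the paper's use of $N_{\mathrm{th}}=T-t^{\mathrm{threshold}}$ (the monotonicity heuristic via Remark~\ref{remark}, the empty-set fallback to DF), and your side claim $\mathcal{E}_T>\mathrm{MSE}$ already holds whenever $12s^2\ge 1$, not only for $s\ge 1$.
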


\begin{IEEEproof}
	According to the inverse model in \eqref{inverse} , the MSE in \eqref{eq:direct_mse} can be easily calculated. Then, the threshold step $t^{\mathrm{threshold}}$ is derived by comparing the upper bound $\mathcal{E}_{t}^{\mathrm{up}}$ in \eqref{eq:reconstruction_error_upper_bound} with the MSE.
\end{IEEEproof}

\begin{figure}[t]
	\normalsize
	\setlength{\abovecaptionskip}{4pt}
	\setlength{\belowcaptionskip}{-0.1cm}
	\centering
	\includegraphics[width=0.9\linewidth]{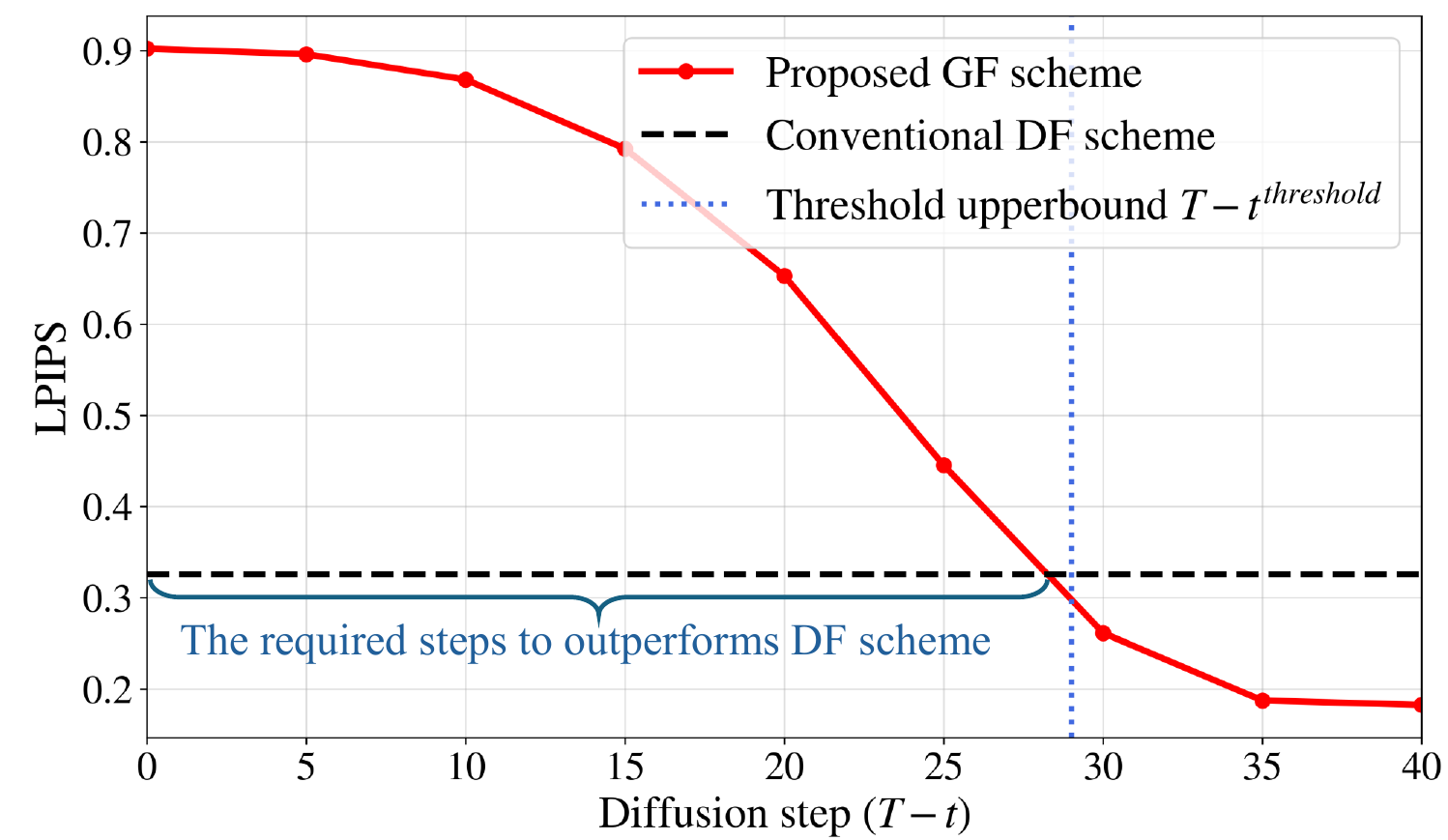}
	%\caption{fig2}
	\captionsetup{justification=justified}
	\caption{LPIPS score versus the number of reverse diffusion steps. The total step $T$, the average packet error  $\rho$, and the quantization step  $s$ are set as  $40$, $0.1$, and $1$, respectively.}
	\label{theorem_figure}
\end{figure} 

\noindent\textbf{Effect on E2E Image Quality.}
Note that the above analysis is conducted in the token domain. To examine how the recovered tokens affect E2E image quality, we feed the intermediate token representation at each reverse step into the image recovery module and obtain
\begin{equation}
	\hat{\mathbf{X}}_t
	=
	D_{\bm{\psi}}(\mathbf{Y}_t).
\end{equation}
Then, we compare $\hat{\mathbf{X}}_t$ with the original image $\mathbf{X}$ using the Learned Perceptual Image Patch Similarity (LPIPS) metric \cite{Zhang2018LPIPS}.

As shown in Fig. \ref{theorem_figure}, the LPIPS score decreases as the number of  diffusion steps increases.  Although Corollary \ref{cor:upper_bound_below_mse} is derived from the token-domain squared Frobenius norm, it still provides an effective estimate of the number of  diffusion steps required to outperform the DF scheme in terms of  image quality. The reason is that the learned tokens are trained to preserve  perceptually relevant information of image, reducing the reconstruction error in the token domain can generally improve the preservation of visual quality.

\section{Optimal Diffusion Steps on Satellite}  

This section optimizes the number of reverse diffusion steps that can be executed by the satellite under orbital energy limitations. First, we model the harvested solar energy and the energy consumed by diffusion computation and downlink forwarding. Then, we  derive the optimal number of diffusion steps.

\subsection{Energy Harvesting Model}

\subsubsection{Solar Energy Harvesting}
Denote the physical orbital time by $\tau$ to avoid confusion with the reverse diffusion index $t$. When $\tau=0$, the satellite is at the point that has
the greatest distance to the Sun. The harvested solar power is modeled as \cite{yang2016towards}
\begin{equation}
	P_{\rm h}(\tau)=
	\begin{cases}
		0, & \tau\in\Omega_{\rm ecl},\\[1mm]
		P_{max}\cos\theta(\tau), & \text{otherwise}.
	\end{cases}
\end{equation}
where $\Omega_{\rm ecl}$ denotes the eclipse interval, $P_{max}$ is the peak solar power determined by the solar-panel area and energy-conversion efficiency, and $\theta(\tau)$ is the angle between the solar-panel normal and the Sun direction. With single-axis solar tracking \cite{yang2016towards}, $\theta(\tau)$ is given by
\begin{align}
	\theta(\tau)
	&=
	\arccos\!\left(
	\sqrt{1-\cos^2\beta \cos^2(\tau\omega)}
	\right),
\end{align}
where $\beta$ denotes the angle between the satellite orbital
plane and the sunlight direction, and $\omega$ is the orbital angular velocity of the satellite.

\subsubsection{Energy Consumption}
The energy consumption of the satellite comes from  three parts: (i) reverse diffusion computation, (ii)  downlink forwarding, and (iii) other satellite base systems, such as attitude control and heat management \cite{nasa_smallsat_power_2026}.

Let $E_{\rm step}$ denote the energy consumption of one reverse diffusion step. In general, it can be written as
\begin{equation}
	E_{\rm step}=P_{\rm step}\tau_{\rm step},
\end{equation}
where $P_{\rm step}$ and $\tau_{\rm step}$ denote the average execution power and execution time of one reverse diffusion step, respectively. 

 From the downlink rate in \eqref{eq:downlink_rate}, the downlink transmission time for forwarding the recovered tokens is
$
	T_{\rm dl}^{\rm tx}
	=
		\frac{R_{s,\mathrm{dl}}}{R_{\rm d}},
$
where $T_{\mathrm{dl}}$ denotes the maximum allowable downlink transmission duration.
Let $p_{\rm dl}$ denote the satellite downlink transmit power. The downlink transmission energy is then
\begin{equation}
	E_{\rm dl}
	=
	p_{\rm dl}T_{\rm dl}^{\rm tx}
	=
	p_{\rm dl}\frac{R_{s,\mathrm{dl}}}{R_{\rm d}}.
\end{equation}

Let $E_{\rm base}$ denote the energy consumption from other satellite systems.  The total energy consumption relevant to diffusion scheduling is
\begin{equation}
	E_{\rm cons}(N)
	=
	NE_{\rm step}+E_{\rm base}+E_{\rm dl},
\end{equation}
where $N$ denotes the number of reverse diffusion steps to be executed.

\subsection{Diffusion Step Optimization}

Let $\tau_0$ denote the start time of on-board diffusion after uplink decoding.  Let $\tau_{\rm end}$ denote the latest time by which the recovered tokens must be delivered to the destination. Since downlink forwarding requires $T_{\rm dl}^{\rm tx}$ seconds, the maximum executable time for diffusion is
\begin{equation}
	T_{\rm exe}
	=
	\left[\tau_{\rm end}-\tau_0-T_{\rm dl}^{\rm tx}\right]^+.
\end{equation}
During this interval, the harvested energy is
\begin{equation}
	E_{\rm h}^{\rm exe}
	=
	\int_{\tau_0}^{\tau_0+T_{\rm exe}}P_{\rm h}(\tau)d\tau .
\end{equation}
Let $B_0$ denote the battery energy at $\tau_0$. The energy available for scheduling, after battery saturation and reserve protection, is
\begin{equation}
	E_{\rm av}
	=
	\left[
	\min\!\left\{B_{\max},B_0+E_{\rm h}^{\rm exe}\right\}
	-B_{\min}-E_{\rm base}-E_{\rm dl}
	\right]^+,
\end{equation}
where $B_{\max}$ is the battery capacity and $B_{\min}$ is the reserved battery energy.

According to Corollary \ref{cor:upper_bound_below_mse}, the plug-and-play DDIM is predicted to outperform the decode-and-forward baseline when the reverse process reaches $t^{\mathrm{threshold}}$. Hence, the minimum number of required  diffusion steps is
$
	N_{\rm th}
	=
	T-t^{\mathrm{threshold}}$.
Following the progressive denoising trend in Remark \ref{remark}, our objective is to maximize the number of executable diffusion steps within the interval $[\tau_0,\tau_0+T_{\rm exe}]$, while ensuring that the executed steps reach $N_{\rm th}$:
\begin{subequations}\label{prob:max_diffusion_steps}
\begin{align}
	\quad
	\max_{N}\quad
	& N \\
	{\rm s.t.}\quad
	& N\geq N_{\rm th},\\
	& N\tau_{\rm step}\le T_{\rm exe},\\
	& NE_{\rm step}\le E_{\rm av}.
\end{align}
\end{subequations}
The first constraint enforces the minimum number of reverse diffusion steps. The second constraint is the executable-time constraint, and the third constraint is the energy-causality constraint after reserving the downlink transmission energy.

The problem has a closed-form solution:
\begin{equation}
	N^{\star}
	=
	\begin{cases}
		\displaystyle
		\min\!\left\{
		T,\left\lfloor\dfrac{T_{\rm exe}}{\tau_{\rm step}}\right\rfloor,
		\left\lfloor\dfrac{E_{\rm av}}{E_{\rm step}}\right\rfloor
		\right\},
		& \text{if } N_{\max}^{\rm exe}\ge N_{\rm th},\\[3mm]
		0, & \text{otherwise},
	\end{cases}
\end{equation}
where
\begin{equation}
	N_{\max}^{\rm exe}
	=
	\min\!\left\{
	T,\left\lfloor\frac{T_{\rm exe}}{\tau_{\rm step}}\right\rfloor,
	\left\lfloor\frac{E_{\rm av}}{E_{\rm step}}\right\rfloor
	\right\}.
\end{equation}
When $N_{\max}^{\rm exe}< N_{\rm th}$, the predicted threshold cannot be reached under the current energy and time constraints. In this case, the diffusion process is infeasible, $N^{\star}=0$, and the system degenerates into the DF scheme. 

\section{Experimental Results}

\begin{figure*}[!t]
	\normalsize
	\setlength{\abovecaptionskip}{4pt}
	\setlength{\belowcaptionskip}{-0.1cm}
	\centering
	\subfigure[Latency versus satellite altitude.]{
		\includegraphics[width=0.45\textwidth]{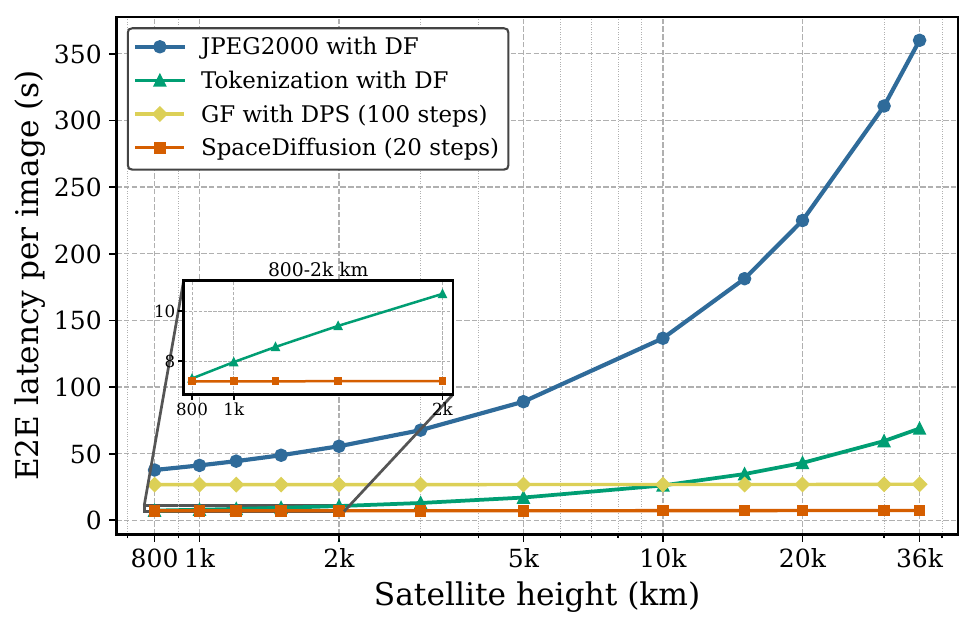}
		\label{fig:e2e_latency_vs_sat_height}
	}
	\hfill
	\subfigure[Latency breakdown (2000 km height).]{
		\includegraphics[width=0.45\textwidth]{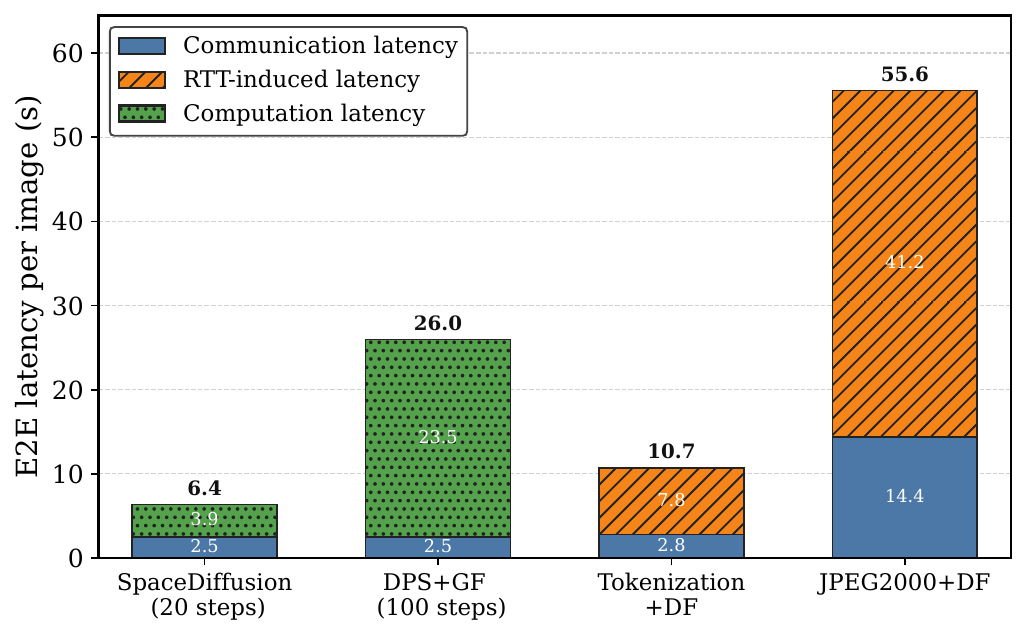}
		\label{fig:latency_breakdown_5k}
	}
	\captionsetup{justification=justified}
	\caption{E2E latency comparisons. The LPIPS for all the schemes are around $0.2$. }
	\label{fig:latency_comparisons}
	 \vspace{-7pt}
\end{figure*}
\subsection{Experimental Settings}

\subsubsection{\textbf{Model architecture}}
We implement the tokenization coders by using the first-stage convolutional autoencoder of a stable latent diffusion model \cite{rombach2022high}.  For the proposed plug-and-play DDIM algorithm, we employ the pretrained noise predictor from the stable latent diffusion model \cite{rombach2022high}. All neural-network modules are implemented in PyTorch \cite{paszke2019pytorch}. The NVIDIA A100 GPU is used as a terrestrial profiling platform for measuring the execution cost of the diffusion workload. This setting is reasonable because the recent orbital AI-computing platforms, such as Starcloud \cite{feilden2024train}, have adopted H100-class GPUs with higher AI throughput. 

\subsubsection{\textbf{Dataset}}
The experiments are conducted on the AFHQ dataset \cite{choi2020starganv2}, a well-known large-scale image dataset for high-quality image synthesis. AFHQ contains $15{,}000$ images with an original resolution of $512\times512$. We use the training dataset to estimate the token statistics $\{\boldsymbol{\mu},\boldsymbol{\Sigma}\}$, while all reported results are evaluated on the validation set. We further validate the visual performance on the Kodak dataset \cite{Kodak}.

\subsubsection{\textbf{Onboard deployment discussion}} The denoising UNet contains $859.52$ million parameters and occupies approximately $1.60$ GiB using FP16 weights. For a $512\times512$ image, each reverse step  with automatic mixed precision (AMP) takes approximately $167$ ms on the A100. These requirements are compatible with emerging high-performance orbital platforms such as Starcloud-1, while resource-constrained CubeSats would require model compression.
\subsubsection{\textbf{Parameter settings}}
The orbital parameters and computation-energy settings are chosen with reference to recent AI-computing satellite efforts, such as Starcloud and NASA microsatellite settings \cite{feilden2024train,nasa_smallsat_power_2026}, which are summarized in Table \ref{tab:experiment_parameters}. The source rate is controlled by the quantization step scale $s$ and the packet error rate is calculated according to the finite block length theory in \eqref{finite_eq}. The small scale fading follows the Rician distribution with factor being $1$, i.e., $g_m\sim \mathcal{CN}(\sqrt{\frac{1}{2}},\frac{1}{2})$. Downlink transmissions are assumed to be error-free.  The overhead of the packet header is around $20\%$ of the total source bits. The uplink transmission rate is around $18$ kb/s, while the downlink transmission rate can be hundred kHz according channel conditions. The downlink compression bit per pixel (bpp) is set as $0.4$.
\begin{table}[t]
	\centering
	\caption{Default simulation parameters.}
	\label{tab:experiment_parameters}
	\renewcommand{\arraystretch}{1.05}
	\footnotesize
	\begin{tabular}{p{0.39\columnwidth} p{0.50\columnwidth}}
		\hline
		\textbf{Parameter} & \textbf{Default setting} \\
		\hline
		DDIM parameters & $\sigma_t=0,\eta=0.8$ \\
		Packet length & $2048$ \\
		Uplink channel coding rate & $0.1$ \\
		Carrier frequency & $2$ GHz\\
		Bandwidth & $180$ kHz \\
		Uplink transmit power & $23$ dBm \\
		Downlink transmit power & $30$ dBm \\
		Device transmit/receive antenna gain & $0$ dBi \\
		Satellite receive antenna gain & $15$ dBi \\
		Satellite transmit antenna gain & $25$ dBi \\
		The elevation angle & $30^{\circ}$ \\
		Solar energy harvesting peak power & $800$ W \\
		Computational power & $200$ W \\
		Satellite base energy consumption & $200$ W \\
		\hline
	\end{tabular}
\end{table}

\subsubsection{\textbf{Benchmarking schemes}} Instead of using pixel-level MSE as the performance metric, we adopt the LPIPS metric to evaluate the perceptual quality of the received images.
The proposed SpaceDiffusion scheme is compared with the following schemes.
\begin{itemize}
	\item \textbf{JPEG2000+DF}: The source image is compressed by JPEG2000 \cite{1037027}, protected by the same channel code, decoded at the satellite, and forwarded to the destination without generative recovery. 
	\item \textbf{Tokenization+DF}: This scheme uses exactly the same
first-stage autoencoder, transform coding, quantization, entropy coding,
packetization, and channel coding as SpaceDiffusion. After channel
decoding, the corrupted latent tokens are directly forwarded without
on-board generative recovery. 
	
	\item \textbf{DPS+GF}: As a generative baseline, we employ the conventional DPS algorithm \cite{chung2023diffusion} for token generation at the satellite. Since DPS is originally used for image-domain inverse problem, we redesign it for token-domain generation without changing its main structure.  
\end{itemize}

\subsection{E2E Latency Comparisons}

First, we investigate the E2E latency of the proposed SpaceDiffusion scheme over different satellite heights. The elevation angle is set to $5^{\circ}$ to investigate the poor channel conditions of satellite communications. The packet error rate is set as $0.1$. The considered DF relay adopts single-process stop-and-wait retransmission protocol in NB-IoT NTN setting.
When an uplink packet is decoded, the satellite feeds back an ACK/NACK to the source, and the source retransmits the lost packet until successful decoding.  Therefore, the latency of the DF scheme explicitly includes both the extra transmission time caused by retransmissions and the ACK/NACK-induced RTT penalty.

Fig.~\ref{fig:latency_comparisons} compares the schemes at the same target quality of approximately $0.2$ LPIPS. SpaceDiffusion maintains a nearly constant latency of $6$--$7$ s/image across the considered satellite heights because it avoids uplink retransmissions and ACK/NACK feedback. In contrast, the DF baselines become increasingly slow as altitude increases, with JPEG2000+DF exceeding $300$ s/image at high altitude. SpaceDiffusion consequently achieves a $4$--$50\times$ latency reduction over the DF baselines and is also faster than DPS+GF, which requires $100$ rather than $20$ reverse steps.

The breakdown in Fig.~\ref{fig:latency_breakdown_5k} explains these gains. Generative recovery enables SpaceDiffusion to use about $6\times$ lower bpp than JPEG2000 at the same perceptual quality. It also replaces the $7.8$ s/image RTT component of Tokenization+DF with $3.9$ s/image of local diffusion computation, reducing latency by about $40\%$. Moreover, using $20$ rather than $100$ reverse steps reduces the DPS computation latency from $23.5$ to $3.9$ s/image. It is worth noticing that the computation latency of the proposed scheme can be further reduced by using more lightweight models and parallel image processing.

\subsection{Reconstruction Performance Comparisons}

In this subsection, we investigate the performance of the proposed SpaceDiffusion scheme compared with the benchmarking schemes.
The quantization step is set as $0.8$, the bpp is around $0.1$, and the bandwidth ratio is $0.49$ for all schemes. The satellite height is set as $600$ km. The solar power is enough to support the diffusion computation.
\begin{figure}[t]
	\normalsize
	\setlength{\abovecaptionskip}{4pt}
	\setlength{\belowcaptionskip}{-0.1cm}
	\centering
	\includegraphics[width=0.85\linewidth]{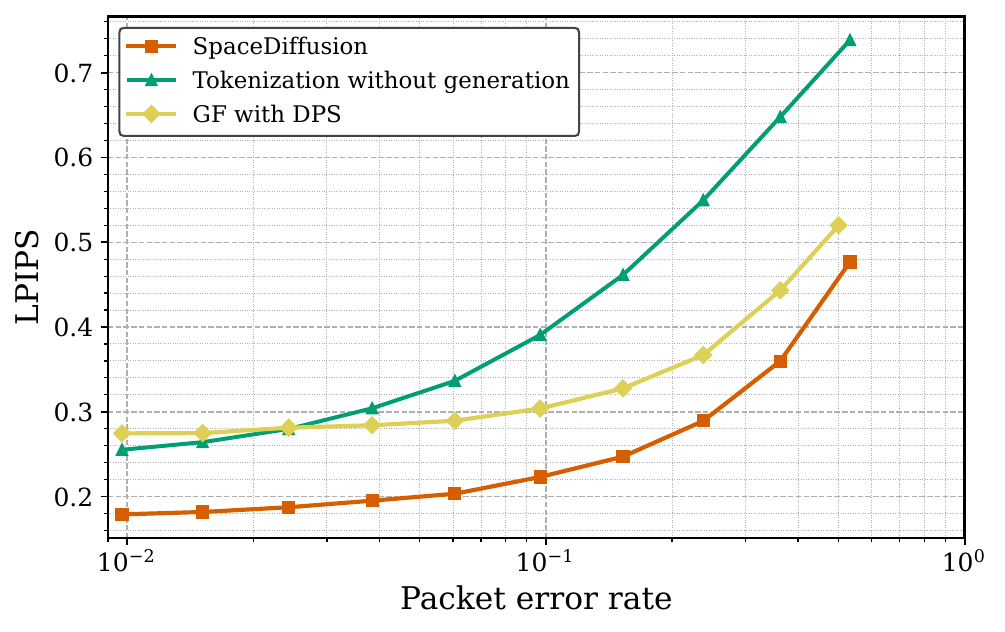}
	%\caption{fig2}
	\captionsetup{justification=justified}
	\caption{Reconstruction quality versus packet error rate.}
	\label{fig:per_lpips_comparison}
\end{figure} 

\begin{figure}[t]
	\normalsize
	\setlength{\abovecaptionskip}{4pt}
	\setlength{\belowcaptionskip}{-0.1cm}
	\centering
	\includegraphics[width=0.85\linewidth]{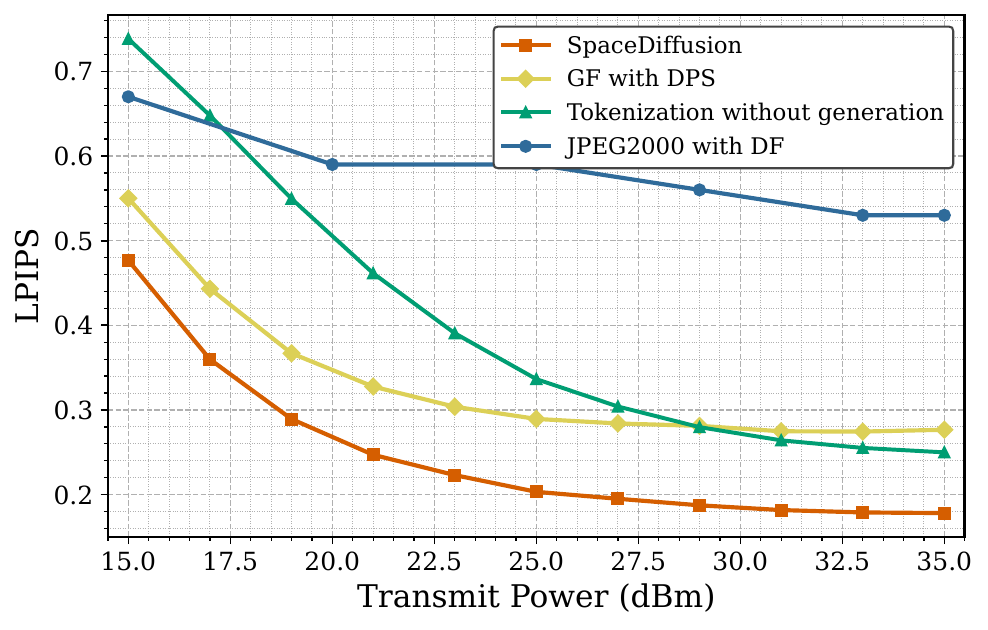}
	%\caption{fig2}
	\captionsetup{justification=justified}
	\caption{Reconstruction quality versus uplink transmit power. }
	\label{fig:power_lpips_comparison}
	 \vspace{-3pt}
\end{figure}                                                                               

\begin{figure}[t]
	\normalsize
	\setlength{\abovecaptionskip}{4pt}
	\setlength{\belowcaptionskip}{-0.1cm}
	\centering
	\includegraphics[width=1.\linewidth]{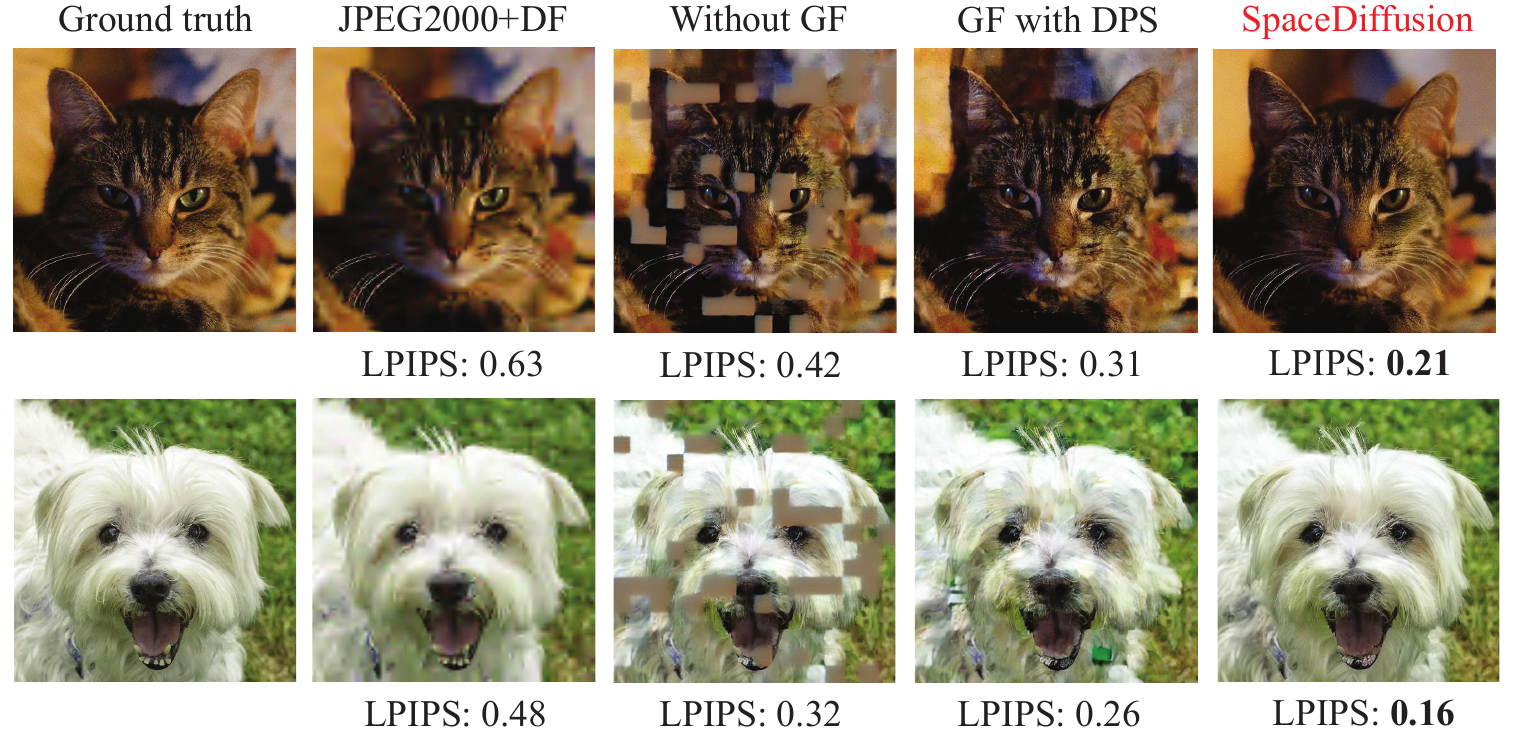}
	%\caption{fig2}
	\captionsetup{justification=justified}
	\caption{Visual reconstruction examples on the AFHQ dataset. }
	\label{fig:afhq_visual_examples}
	 \vspace{-3pt}
\end{figure}    

\begin{figure*}[t]
	\normalsize
	\setlength{\abovecaptionskip}{4pt}
	\setlength{\belowcaptionskip}{-0.1cm}
	\centering
	\includegraphics[width=1.\linewidth]{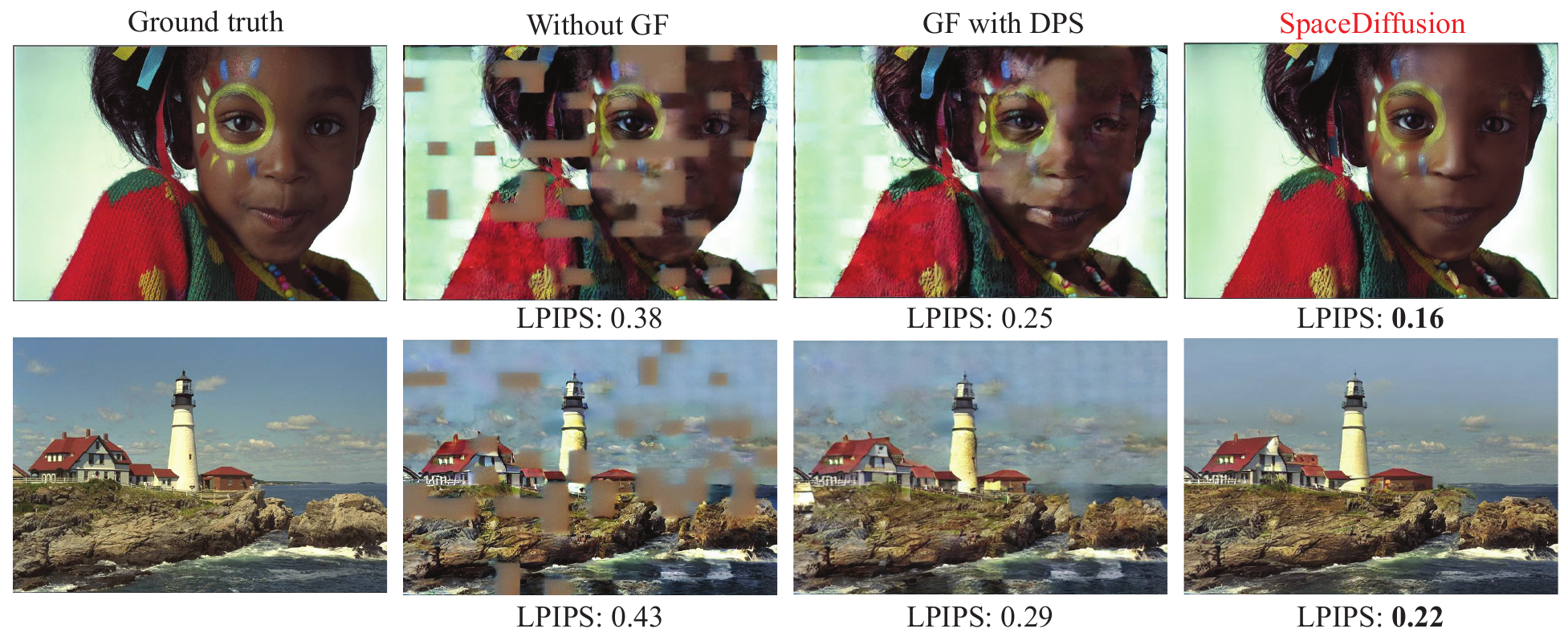}
	%\caption{fig2}
	\captionsetup{justification=justified}
	\caption{Visual reconstruction examples on the Kodak dataset \cite{Kodak}. }
	\label{fig:kodak_visual_examples}
	 \vspace{-7pt}
\end{figure*}    

Fig.~\ref{fig:per_lpips_comparison} shows the LPIPS performance as a function of the packet error rate. The retransmission protocol is disabled to evaluate the robust performance of all the schemes. Here, we omit the  JPEG2000 scheme  because its robust performance is much poorer than the other schemes. It is observed that all the benchmarking schemes degrade under  severe packet losses, but SpaceDiffusion consistently achieves the lowest LPIPS. For example, at a packet error rate of $0.1$, SpaceDiffusion obtains an LPIPS of about $0.22$, while GF with DPS and tokenization without generation achieve about $0.30$ and $0.39$, respectively. The performance gap becomes more pronounced in the medium-to-high error regime, where the non-generative tokenization baseline suffers from unrecovered token erasures. This demonstrates that the proposed SpaceDiffusion scheme can effectively recover the missing content and improve the E2E image quality under lossy uplink channels. 

Fig.~\ref{fig:power_lpips_comparison} further evaluates the performance as a function of the uplink transmit power. As the transmit power increases, the channel becomes more reliable and the LPIPS of all token-based schemes decreases. The proposed SpaceDiffusion scheme requires much lower transmit power to achieve high-quality image transmission. For example, when the target LPIPS is around $0.25$, SpaceDiffusion saves about $15$ dB of uplink transmit power compared with the non-generative tokenization baseline. This power saving makes the proposed scheme particularly suitable for power-constrained NTN IoT terminals and handheld devices.  

To examine generative fidelity and potential hallucinations, Figs.~\ref{fig:afhq_visual_examples} and \ref{fig:kodak_visual_examples} show reconstruction examples on the AFHQ and Kodak datasets at an uplink rate of approximately $0.11$ bpp and a packet error rate of $0.1$. JPEG2000 with retransmission produces visible blurring, while tokenization without generation leaves structured gaps caused by packet erasures.  DPS partially fills the missing regions but exhibits texture inconsistency and residual artifacts. In comparison, SpaceDiffusion produces more coherent structures and textures and achieves the lowest LPIPS on both datasets. Although the generated local details may not match the source exactly, SpaceDiffusion helps preserve enough semantic information for human interpretation.

\subsection{Discussions on Energy-Limited Scenarios}

\begin{figure*}[!t]
	\normalsize
	\setlength{\abovecaptionskip}{4pt}
	\setlength{\belowcaptionskip}{-0.1cm}
	\centering
	\subfigure[Solar power and optimized diffusion steps over orbital time.]{
		\includegraphics[width=0.45\textwidth]{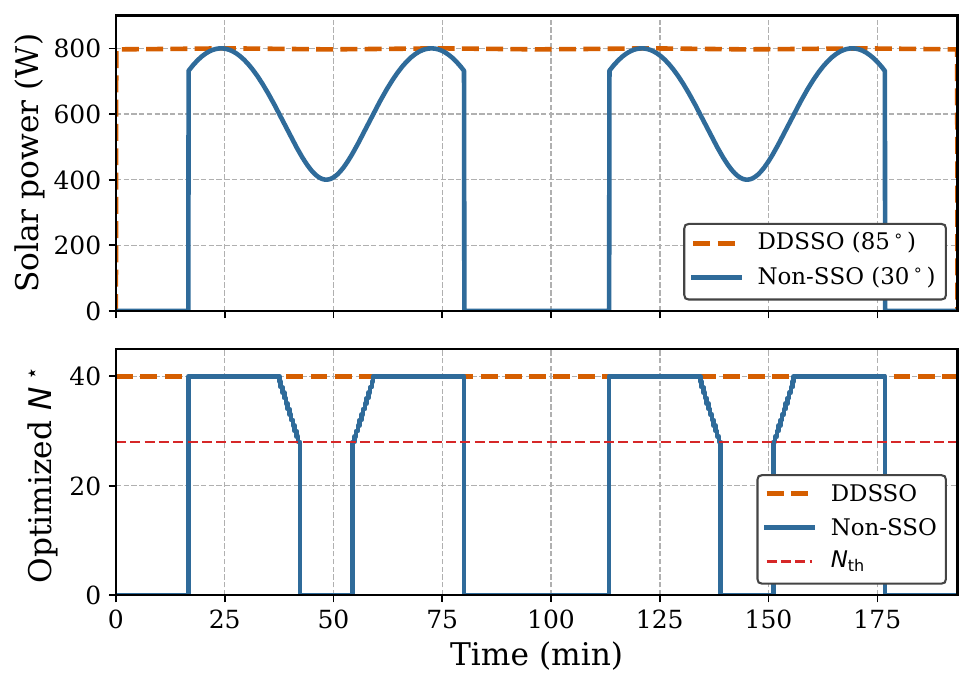}
		\label{fig:orbit_power_n_star_comparison}
	}
	\hfill
	\subfigure[Average LPIPS versus solar beta angle.]{
		\includegraphics[width=0.45\textwidth]{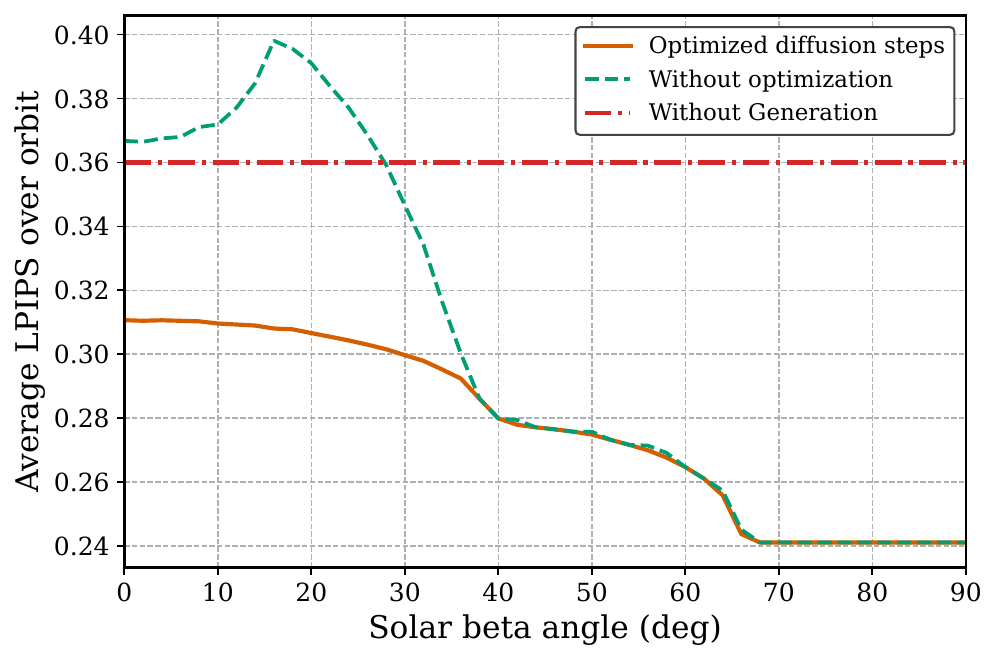}
		\label{fig:average_lpips_vs_alpha}
	}
	\captionsetup{justification=justified}
	\caption{Energy-limited diffusion scheduling.  }
	\label{fig:energy_limited_scenario}
	 \vspace{-7pt}
\end{figure*}

We finally examine whether SpaceDiffusion remains practical when the satellite energy supply is constrained by orbital illumination. Fig.~\ref{fig:orbit_power_n_star_comparison} compares a dawn-dusk Sun-synchronous orbit (DDSSO) and a non-Sun-synchronous orbit (Non-SSO). The DDSSO case has nearly continuous solar illumination, and the harvested power stays close to the peak value. Here, we set the maximum diffusion step size to be $N=40$. The satellite height is around $600$ km.  The maximum executable time and initial battery capacity are set as $8$ s and $0$ mAh, respectively.

% It is observed that the optimized number of reverse steps remains at $N^{\star}=40$, which is above the threshold $N_{\rm th}$ required to outperform the DF baseline. In contrast, the Non-SSO case experiences eclipse intervals. When the available energy is sufficient, the satellite executes the full reverse process; when the energy budget cannot support at least $N_{\rm th}$ steps, the optimizer sets $N^{\star}=0$ and the system degenerates to DF scheme. This threshold behavior avoids wasting energy on an incomplete diffusion process whose output may still be worse than direct forwarding.
% Fig.~\ref{fig:average_lpips_vs_alpha} shows the average LPIPS over an orbit duration as the solar beta angle increases. A larger beta angle reduces the eclipse duration and increases the effective energy available for on-board generation. The optimized diffusion-step policy consistently outperforms the no-generation baseline, reducing the average LPIPS from about $0.36$ to about $0.32$ in energy-scarce cases and to about $0.24$ when the beta angle is sufficiently large. Without the proposed optimization, the generation process can be activated even when the available energy supports only an insufficient number of reverse steps. This leads to poor intermediate reconstructions and can be worse than no generation in low-energy regimes. This figure also provides an interesting guideline for SpaceDiffusion deployment over different orbits. For example, if an average LPIPS target of $0.26$ is required, the satellite orbit should have a solar beta angle larger than approximately $64^\circ$.

Fig.~\ref{fig:orbit_power_n_star_comparison} shows that the nearly continuous illumination in DDSSO supports the full $N^{\star}=40$ reverse steps, whereas the Non-SSO case executes diffusion only when the available energy supports at least $N_{\rm th}$ steps; otherwise, it sets $N^{\star}=0$ and falls back to DF. This threshold policy prevents ineffective partial generation. As shown in Fig.~\ref{fig:average_lpips_vs_alpha}, a larger solar beta angle shortens the eclipse and increases the energy available for generation. The optimized policy consequently reduces the average LPIPS from about $0.36$ to $0.32$ in energy-scarce cases and to $0.24$ at sufficiently large beta angles, while avoiding the poor intermediate reconstructions produced by ungated generation.

\section{Concluding Remarks}
This paper proposed the SpaceDiffusion framework that exploits on-board computing resources for efficient image transmission over limited satellite links. An interesting insight is that the channel-distortion model can act as an inserted ``prompt'' that guides diffusion reconstruction. Thus, the GF framework can reuse well-trained latent diffusion models developed for image-generation tasks without channel-specific retraining. Simulation results further show that on-board generation significantly reduces E2E latency because it replaces feedback-driven uplink retransmissions with generative reconstruction. This latency reduction particularly benefits delay-sensitive IoT and handheld applications.

Several directions merit further study. SpaceDiffusion can be extended to multi-user and multi-satellite networks through joint communication, computation, and energy allocation. To mitigate generative hallucinations, future work may investigate unequal error protection for semantically important data tokens. Other promising topics include multimodal GF for video, audio, and sensing data and hardware-aware acceleration for practical on-board deployment.
\section*{Appendix}
\setcounter{section}{0}
\renewcommand{\thesection}{\Alph{section}}
\renewcommand{\thesectiondis}{\Alph{section}}
\subsection{Proof of Proposition \ref{prop:posterior_score_spacediffusion}}

	Let $a_n\in\{0,1\}$ denote the common entry of the $n$-th row of $\mathbf{A}$, i.e.,
	\begin{equation}
		[\mathbf{A}]_{n,d}=a_n,\qquad \forall d\in\{1,\dots,D\}.
	\end{equation}
	Also, let $\tilde{\mathbf{y}}_n$ and $\check{\mathbf{y}}_{t,n}$ denote the $n$-th rows of $\tilde{\mathbf{Y}}$ and $\check{\mathbf{Y}}_t$, respectively.

	Under the pseudo-inverse Gaussian approximation in \eqref{gaussian}, 
	each row satisfies
	\begin{equation}
		\mathbf{y}_n\mid \mathbf{Y}_t
		\approx
		\mathcal{N}(\check{\mathbf{y}}_{t,n},r_t^2\mathbf{I}_D).
	\end{equation}

	According to the inverse model in \eqref{inverse}, the row-wise observation model is
	\begin{equation}
		\tilde{\mathbf{y}}_n=
		\begin{cases}
			\mathbf{y}_n+\mathbf{q}_n, & a_n=1,\\
			\mathbf{0}, & a_n=0,
		\end{cases}
	\end{equation}
	where $\mathbf{q}_n\sim\mathcal{N}(\mathbf{0},\mathbf{\Sigma}_q)$.
	Therefore, when $a_n=1$,
	\begin{align}
		p_t(\tilde{\mathbf{y}}_n\mid \mathbf{Y}_t)
		&=
		\int p_t(\mathbf{y}_n\mid \mathbf{Y}_t)\,p(\tilde{\mathbf{y}}_n\mid \mathbf{y}_n)\,d\mathbf{y}_n \nonumber\\
		&\approx
		\int
		\mathcal{N}(\mathbf{y}_n;\check{\mathbf{y}}_{t,n},r_t^2\mathbf{I}_D)
		\mathcal{N}(\tilde{\mathbf{y}}_n;\mathbf{y}_n,\mathbf{\Sigma}_q)
		\,d\mathbf{y}_n \nonumber\\
		&=
		\mathcal{N}(\tilde{\mathbf{y}}_n;\check{\mathbf{y}}_{t,n},r_t^2\mathbf{I}_D+\mathbf{\Sigma}_q).
	\end{align}
	When $a_n=0$, $\tilde{\mathbf{y}}_n=\mathbf{0}$ is deterministic and thus contributes only a constant independent of $\mathbf{Y}_t$.

	Hence, up to an additive constant $c$ independent of $\mathbf{Y}_t$,
	\begin{align}
		&\log p_t(\tilde{\mathbf{Y}}\mid \mathbf{Y}_t) \nonumber \\
		&=
		c
		-\frac{1}{2}\sum_{n=1}^{K}
		a_n
		(\tilde{\mathbf{y}}_n-\check{\mathbf{y}}_{t,n})
		(r_t^2\mathbf{I}_D+\mathbf{\Sigma}_q)^{-1}
		(\tilde{\mathbf{y}}_n-\check{\mathbf{y}}_{t,n})^{\top}.
	\end{align}

	Differentiating the above expression with respect to $\check{\mathbf{Y}}_t$ yields
	\begin{align}
		\nabla_{\check{\mathbf{Y}}_t}\log p_t(\tilde{\mathbf{Y}}\mid \mathbf{Y}_t)
		=
		\mathbf{A}\odot
		\left(
			(\tilde{\mathbf{Y}}-\check{\mathbf{Y}}_t)
			(r_t^2\mathbf{I}_D+\mathbf{\Sigma}_q)^{-1}
		\right).
	\end{align}

	Then, by the chain rule in vectorized form, we have
	\begin{align}
		&\operatorname{vec}\!\left(
			\nabla_{\mathbf{Y}_t}\log p_t(\tilde{\mathbf{Y}}\mid \mathbf{Y}_t)
		\right) \nonumber \\
		&=
		\left(
			\frac{\partial \operatorname{vec}(\check{\mathbf{Y}}_t)}
			{\partial \operatorname{vec}(\mathbf{Y}_t)}
		\right)^{\top}
		\operatorname{vec}\!\left(
			\nabla_{\check{\mathbf{Y}}_t}\log p_t(\tilde{\mathbf{Y}}\mid \mathbf{Y}_t)
		\right) \nonumber\\
		&=
		\mathbf{J}_t^{\top}\operatorname{vec}(\mathbf{G}_t).
	\end{align}
	Equivalently, we have
	$
		\nabla_{\mathbf{Y}_t}\log p_t(\tilde{\mathbf{Y}}\mid \mathbf{Y}_t)
		=
		\operatorname{unvec}\!\left(
			\mathbf{J}_t^{\top}\operatorname{vec}(\mathbf{G}_t)
		\right).
	$
Finally, by substituting the above result into \eqref{score}, 
	we obtain
	\begin{align}
		\nabla_{\mathbf{Y}_t}\log p_t(\mathbf{Y}_t\mid \tilde{\mathbf{Y}})
		\approx
		\mathbf{S}_{\bm{\Theta}}(\mathbf{Y}_t,t)
		+
		\operatorname{unvec}\!\left(
			\mathbf{J}_t^{\top}\operatorname{vec}(\mathbf{G}_t)
		\right).
	\end{align}
	This completes the proof.

\subsection{Proof of Proposition \ref{thm:reconstruction_error_upper_bound}}\label{appendix:proof_thm_reconstruction_error_upper_bound}

	Under the deterministic DDIM setting $\sigma_t=0$, we have 
	\begin{align}
		\mathbf{Y}_{t-1}
		=
		\sqrt{\bar{\alpha}_{t-1}}\check{\mathbf{Y}}_t
		+
		\sqrt{1-\bar{\alpha}_{t-1}}\hat{\boldsymbol{\epsilon}}_{\bm{\Theta},t}
		+
		\delta_t\mathcal{L}_t.
	\end{align}
	Let
$
		\mathbf{e}_t
		=
		\mathbf{Y}_{t-1}^{\mathrm{ddim}}-\mathbf{Y}.
$
	Then, the reconstruction error can be calculated as 
	\begin{align}
		\mathcal{E}_{t-1}
		&=
		\mathbb{E}\!\left[\|\mathbf{Y}_{t-1}-\mathbf{Y}\|_F^2\right] \nonumber\\
		&=
		\mathbb{E}\!\left[
		\|\mathbf{e}_t+\delta_t\mathcal{L}_t\|_F^2
		\right] \nonumber\\
		&=
		\mathbb{E}\!\left[\|\mathbf{e}_t\|_F^2\right]
		+
		2\delta_t
		\mathbb{E}\!\left[
		\left\langle\mathbf{e}_t,\mathcal{L}_t\right\rangle
		\right]
		+
		\delta_t^2
		\mathbb{E}\!\left[\|\mathcal{L}_t\|_F^2\right] \nonumber\\
		&\overset{(a)}{\le}
		\left(
		\sqrt{\mathbb{E}\!\left[\|\mathbf{e}_t\|_F^2\right]}
		+
		\delta_t
		\sqrt{\mathbb{E}\!\left[\|\mathcal{L}_t\|_F^2\right]}
		\right)^2 \nonumber\\
		&\overset{(b)}{\approx}
		\left(
		\sqrt{a_t^{\mathrm{ddim}}}
		+
		\delta_t
		\sqrt{\mathbb{E}\!\left[\|\mathcal{L}_t\|_F^2\right]}
		\right)^2 \nonumber\\
		&\overset{(c)}{\lesssim}
		\left(
		\sqrt{a_t^{\mathrm{ddim}}}
		+
		\delta_t\kappa\sqrt{\chi_t}
		\right)^2,
		\label{eq:appendix_main_error_chain}
	\end{align}
	where $\langle\cdot,\cdot\rangle$ denotes the Frobenius inner product. 
		The justifications of the steps are as follows.
		\begin{itemize}
			\item \textbf{Step $(a)$:} By the Cauchy--Schwarz inequality \cite{rohatgi2015introduction}, we have
			\begin{equation}
				\left|
				\mathbb{E}\!\left[
				\left\langle\mathbf{e}_t,\mathcal{L}_t\right\rangle
				\right]
				\right|
				\le
				\sqrt{
				\mathbb{E}[\|\mathbf{e}_t\|_F^2]
				\mathbb{E}[\|\mathcal{L}_t\|_F^2]
				}.
			\end{equation}
			
					\item \textbf{Step $(b)$:} From \eqref{eq:checkY_from_noise} and Assumption \ref{assump:denoising_error}, the conventional DDIM error can be written as
					\begin{align}
						\mathbf{e}_t
						&=
						\frac{\sqrt{\bar{\alpha}_{t-1}}}{\sqrt{\bar{\alpha}_t}}\mathbf{Y}_t
						-
						\mathbf{Y} \nonumber\\
						&\quad+
						\left(
						\sqrt{1-\bar{\alpha}_{t-1}}
						-
						\sqrt{\frac{\bar{\alpha}_{t-1}(1-\bar{\alpha}_t)}{\bar{\alpha}_t}}
						\right)
						(\boldsymbol{\epsilon}_t+\mathbf{U}_t).
						\label{eq:appendix_et_with_yt}
					\end{align}
					Using the forward-trajectory approximation in \eqref{eq:ddim_forward},
					\begin{equation}
						\mathbf{Y}_t
						\approx
						\sqrt{\bar{\alpha}_t}\mathbf{Y}
						+
						\sqrt{1-\bar{\alpha}_t}\boldsymbol{\epsilon}_t.
					\end{equation}
					substitution into \eqref{eq:appendix_et_with_yt} gives
					\begin{align}
						\mathbf{e}_t
						&\approx
						(\sqrt{\bar{\alpha}_{t-1}}-1)\mathbf{Y}
						+
						\sqrt{1-\bar{\alpha}_{t-1}}\boldsymbol{\epsilon}_t \nonumber\\
						&\quad+
						\left(
						\sqrt{1-\bar{\alpha}_{t-1}}
						-
						\sqrt{\frac{\bar{\alpha}_{t-1}(1-\bar{\alpha}_t)}{\bar{\alpha}_t}}
						\right)\mathbf{U}_t.
					\end{align}
					Since $\boldsymbol{\epsilon}_t$ and $\mathbf{U}_t$ are zero-mean and independent of $\mathbf{Y}$, taking expectations gives $\mathbb{E}[\|\mathbf{e}_t\|_F^2]\approx a_t^{\mathrm{ddim}}$ as defined in \eqref{eq:ddim_error_with_training_noise}.
			
			\item \textbf{Step $(c)$:} For the $n$-th row $\mathbf{g}_{t,n}$ of $\mathbf{G}_t$, Proposition \ref{prop:posterior_score_spacediffusion} gives
				\begin{align}
					\mathbf{g}_{t,n}
					=
						a_n(\tilde{\mathbf{y}}_n-\check{\mathbf{y}}_{t,n})\mathbf{B}_t 
						=
						a_n\left(\mathbf{q}_n+\sqrt{\frac{1-\bar{\alpha}_t}{\bar{\alpha}_t}}\mathbf{u}_{t,n}\right)\mathbf{B}_t,
					\end{align}
					where $\tilde{\mathbf{y}}_n=\mathbf{y}_n+\mathbf{q}_n$ for $a_n=1$ and $\check{\mathbf{y}}_{t,n}\approx\mathbf{y}_n-\sqrt{(1-\bar{\alpha}_t)/\bar{\alpha}_t}\,\mathbf{u}_{t,n}$. Since $a_n$, $\mathbf{q}_n$, and $\mathbf{u}_{t,n}$ are independent,
			$
				\mathbb{E}\!\left[\|\mathbf{G}_t\|_F^2\right]
				=
				\chi_t.
			$
			Moreover, from
			$
				\mathcal{L}_t
				=
				\operatorname{unvec}
				(\mathbf{J}_t^{\top}\operatorname{vec}(\mathbf{G}_t))
			$
			and $\|\mathbf{J}_t\|_2\le\kappa_t$,
			\begin{align}
				\|\mathcal{L}_t\|_F^2
				&=
				\left\|
				\mathbf{J}_t^{\top}\operatorname{vec}(\mathbf{G}_t)
				\right\|_2^2 \nonumber\\
				&\le
				\|\mathbf{J}_t\|_2^2\|\mathbf{G}_t\|_F^2
				\le
				\kappa^2\|\mathbf{G}_t\|_F^2.
			\end{align}
			Taking expectation yields $\mathbb{E}[\|\mathcal{L}_t\|_F^2]\le\kappa^2\chi_t$. 
		\end{itemize}
		This completes the proof.

\addtolength{\topmargin}{0.02in}	
\IEEEpeerreviewmaketitle
\bibliographystyle{IEEEtran}
\bibliography{semantic}

\end{document}